\newtheorem{theorem}{Theorem}
\newtheorem{definition}{Definition}
\newtheorem{proposition}{Proposition}
\newtheorem{corollary}{Corollary}
\newcommand\cost[3]{c_{#1}(#2,#3)}
\newcommand\costs[2]{c_{#1}(#2)}
\newcommand\type{t}
\newcommand{\citet}[1]{\citeauthor{#1} \shortcite{#1}}
\title{The Competitive Effects of Variance-based Pricing}
\author{
Ludwig Dierks$^1$\and
Sven Seuken$^1$\\
\affiliations
$^1$Department of Informatics, University Zurich \\
\emails
       dierks@ifi.uzh.ch,
seuken@ifi.uzh.ch
}
\begin{document}

\maketitle

\begin{abstract}
In many markets, like electricity or cloud computing markets, providers incur large costs for keeping sufficient capacity in reserve to accommodate demand fluctuations of a mostly fixed user base. These costs are significantly affected by the unpredictability of the users' demand. Nevertheless, standard mechanisms charge fixed per-unit prices that do not depend on the variability of the users' demand. In this paper, we study a variance-based pricing rule in a two-provider market setting and perform a game-theoretic analysis of the resulting competitive effects. We show that an innovative provider who employs variance-based pricing can choose a pricing strategy that guarantees himself a higher profit than using fixed per-unit prices for any individually rational response of a provider playing a fixed pricing strategy. We characterize all equilibria for the setting where both providers employ variance-based pricing strategies. We find that, while in equilibrium, the profits of the providers may increase or decrease depending on their cost functions, social welfare always weakly increases.
\end{abstract}

\section{Introduction}
In most markets with mostly fixed user bases, providers' costs are largely driven by how much buffer capacity they must keep in reserve. This, in turn, depends on the variance of their users' demand. However, the predominant pricing mechanisms employed in practice  do not take this effect into account.
Instead, prices are typically set on a per-unit basis, such that every user pays the same for the same product.

In electricity markets, for example, a provider has to make long-term supply decisions. But in real-time, supply and  demand always have to be perfectly balanced, which requires a costly buffer infrastructure \cite{cramton2017electricity}. If users would always consume (almost) the same amount of energy, this buffer could be far smaller than with widely varying user demands.
Nevertheless, users pay simple per MW/h prices.

Next, consider the market for mobile data. Mobile network providers continuously expand their cell tower infrastructure to be able to satisfy their users' bandwidth needs under peak demand \cite{lopez2009ofdma}.
However, most end-users pay a fixed per-gigabyte-price, independent of when they consume it or how variable their demand is.\footnote{Even in high-GDP countries, less than 10\% of customers have unlimited data plans \cite{erics}.}

Finally, consider cloud computing markets, where cloud providers must keep buffers of idle capacity in each compute cluster to handle changing resource demands of already running jobs. Handling the variance of cloud users is particularly difficult given that  the resource needs of an individual job may vary by a factor 10 or 100 over time (see  \citet{dierks}). In this domain, the  \emph{mixture} of user types  (i.e., their average demand variance) significantly affects the provider's need to supply buffer capacity. Nevertheless, most cloud resources are sold for a fixed price per core-hour, without regard to the variability of the users' demand.  

%
%
%

\paragraph{Managing Demand via Sophisticated Pricing}
Classic approaches for dealing with varying demand include dynamic pricing and congestion-based pricing  \cite{muratori2015residential,rong2018dynamic,truong2014novel}. These approaches focus on flattening demand peaks. A big downside is that they make it unpredictable for users whether they can obtain the product at a given price when they need it. This puts providers who serve {risk-averse} users or users with relatively uniform but inelastic demand at a competitive disadvantage. In some markets, like cloud computing, this effect is so strong that providers never consider dynamic pricing for their \emph{primary} market offerings \cite{dierks2016cloud}. The effect even greatly hinders the adoption of dynamic pricing in more suitable domains, like electricity markets \cite{joskow2012dynamic}, where customers are instead often only exposed to fixed 'time-of-use' tariffs \cite{urieli2016autonomous,celebi2012time}.

\paragraph{Variance-based Pricing}
In this paper, we study \emph{variance-based pricing}, where part of the price the users pay depends on the variance of their demand.
Variance-based pricing was recently proposed by \citet{dierks} to reduce costs by improving the demand prediction ability of a monopolistic cloud provider.
For a provider, in general markets, variance-based pricing has the advantage that his low-variance users pay lower prices and are thus impacted less by the buffer requirements (which are mainly caused by high-variance users). This is not only  fairer, but importantly, it incentivizes users to reduce their variance, which in turn reduces the provider's costs. In a monopolistic setting, the provider can   obviously use variance-based pricing to increase his profits. However, in a competitive market environment, the effects are unclear, because the competitive pricing pressure by other providers may limit what he can achieve with variance-based pricing.

\paragraph{Overview of our Approach}
We analyze a duopoly of two providers  that compete for a continuum of user types. For each provider, the cost per unit of a product depends on the average variance of the users he attracts and providers either conservatively employ constant per-unit prices or are willing to innovate and employ per-unit prices that linearly depend on the user's variance. We restrict ourself to linear prices here, because their simplicity makes them most plausible and marketable in practice.
We show that, as long as a provider's costs are not far larger than the costs of his competitor, unilaterally switching to variance-based pricing can be used to obtain a higher profit for any reasonable constant\ response of the other provider. We characterize all equilibria that arise if both providers employ variance-based pricing. We also show that, as long as providers are not symmetric in their cost functions, the profits of both providers often increase, as they can  attract user types that their cost function is better suited to serve. Finally, we show that the welfare may decrease if only one provider employs variance-based pricing, but that it can only weakly increase if both employ variance-based pricing. 
 
Variance-based pricing is a type of price discrimination \cite{varian1989price,mussa}. However, in contrast to the classic price discrimination settings \cite{moorthy1984market,blattberg1989price,gallego2006price}, variance-based pricing does not discriminate based on user preferences. Similarly, in our model there is  neither a fixed marginal cost of supplying a given user nor do costs depend solely on the number of supplied products. As long as neither provider charges for variance, they cannot price discriminate at all and the problem becomes similar to a Bertrand competition  \cite{baye2017}. In order to isolate the competitive effects caused by variance-based pricing, we assume that both providers offer the same product; thus, product differentiation (e.g.,  \cite{feng2013price}) does not take place.
%
%
%
%

\section{Preliminaries}
\todo[inline]{indexing}
We consider a market setting with two providers and a continuum of users to which the providers want to sell their products.
We  study a simple two-period model: in the first period, the providers choose their pricing strategies; in the second period,  the users choose a provider to buy the product from.

\subsection{Formal Model}

Each user is associated with a real-valued type $\type\in [0, \type_{max}]$. We keep this type general, but in practice it can be assumed to encode the variance of a user's demand. 
To keep the notations simple, we normalize each user's expected demand to $1$. For a randomly chosen user, her type is distributed as a continuous random variable with pdf $f(\type)$ and cdf $F(\type)$. We do not model fluctuations in the mixture of user types over time. 

Each provider's strategy space consists of his choice of price function $\rho_i$. We restrict $\rho_i=(p_i^f,p_i^\ell)$ to a \emph{fixed} price $p_i^f$ per unit of the product (independent of the user type)  and a second \emph{linearly} type-based charge $p_i^\ell \type$ per unit of the product.\footnote{ As supply decisions, and therefore costs, do not depend on a user's true type, but on the providers predicted type, we assume that providers know each realized users type. In practice, user bases are often mostly fixed, so provider can learn a users type over time.} The overall payment per unit of the product for a user with type $\type$ is given by $\rho_{i}(\type) =p_i^f +  p_i^\ell \type$. Going forward, we refer to providers that are willing to choose any  $p_i^f, p_i^\ell \in \mathbb{R}$ as \emph{innovative} and those who do not adopt the new pricing scheme and restrict themselves to $p_i^\ell =0$ as \emph{conservative}. 

Given the provider's price functions, and depending on a user's own type, each user chooses to obtain the product either from provider $1$ or $2$. 
We denote a strategy profile for all user types by $\sigma(\type):[0,\type_{max}]\rightarrow \left\lbrace 1, 2 \right\rbrace$.
In this paper, we do not model the users' values for product consumption.\footnote{In the markets we study, essentially every user is served by some provider, as costs are very low compared to most users' values and competition ensures that prices are close to costs.} Consequently, the users' utility function is simply equal to their negative payments. Throughout the paper, given price functions $(\rho_1, \rho_2)$, we assume that users  only play \emph{utility-maximizing} (i.e., payment minimizing) user strategy profiles. Formally, we assume that $\sigma(\type)= argmin_i \left\lbrace          \rho_{i}(\type) \right\rbrace$ for all $\type$.
  
As we will see, it is often in a provider's best interest to play essentially the same strategy as their opponent, which makes tie-breaking rules for the user sub-game very important. 
To avoid that tie breaking for $\rho_1=\rho_2$ gives rise to arbitrary user strategy profiles that would not arise in practice, we restrict user strategy profiles to those that arise as the limit of uniquely (up to a null-set) utility-maximizing user strategy profiles for some sequence of pricing functions converging to $(\rho_1,\rho_2)$. We also call these user strategy profiles \emph{enforceable}, as at least one provider can enforce them by deviating from $(\rho_1,\rho_2)$ at only an infinitesimal profit loss. 
With linearly type-based prices, this limits strategy profiles to any form where all users with type greater than some cutoff point $\hat{\type}$ join one provider $i$ and those with lower type join the other provider. To denote this, we also write $\sigma = [0,\hat{\type}]_{\rightarrow i}$ or $\sigma =  [\hat{\type},\type_{max}]_{\rightarrow i}$. Note that any $\sigma$ is uniquely defined by this, as all other users choose the other provider. 
Further, we denote by $\mu(a,b)$ the average type of all users with type in $[a,b]$, i.e., $\mu(a,b)= \frac{\int_{a}^{b} f(\type)\type d\type}{F(b)-F(a)}$.

Each provider's costs do not only depend on how many units of his product he sells in realization, but on how many units he has to supply given a user strategy profile $\sigma$.  A provider's cost function $\costs{j}{\sigma}$ consequently is a function of the whole user strategy profile and independent of any given user's type.
 We assume that $\costs{j}{\sigma}$ is strictly increasing in the expected type of a randomly drawn user that joins provider $j$'s market under $\sigma$.
Overloading notation, we also write $\cost{i}{a}{b}$ for the cost  of provider $i$ if all users in $[a,b]$ (and no other users) choose him.

 In many applications, splitting a given population of users between two identical providers causes higher overall costs than if one provider would obtain all users, as that one provider could always provision for both sub-populations separately.
 We call such cost functions that are convex in relation to splitting the market \emph{split-convex}, i.e. $\costs{i}{}$ is  split-convex if for all $\hat{\type} \in [0,\type_{max}]$ it holds
\begin{align}
 &F(\hat{\type})\cost{1}{0}{\hat{\type}} + (1-F(\hat{\type})) \cost{1}{\hat{\type}}{\type_{\max}} \geq \cost{1}{0}{\type_{max}}.\nonumber
\end{align}
 If the inequality holds with strictly greater (i.e., $>$), we call the cost function  \emph{strictly split-convex}.

A provider's utility is his expected profit $\pi_j$ \emph{per customer in a population}, which is given by 
\begin{equation}
\pi_j(\rho_1,\rho_2, \sigma) = \int_{\type:\sigma(\type) =j} (\rho_{j}(\type) - \costs{j}{\sigma}) f(\type) d \type
\end{equation}

Given this model, we call a strategy for provider $i\in \lbrace1,2\rbrace$ an \emph{individually rational response} to any given strategy $\rho_{-i}$ of the other provider if there exists a enforceable utility-maximizing user strategy profile $\sigma$ that gives provider $i$ weakly positive profit, i.e., $\pi_i(\rho_i,\rho_{-i}, \sigma) \geq 0$.

\subsection{Equilibrium Concept}

We use the following equilibrium concept for our analysis.

\begin{definition}
        A tuple $(\rho_1,\rho_2, \sigma)$ is a Bayes-Nash equilibrium (BNE) if $\sigma$ is utility maximizing for $(\rho_1,\rho_2)$ and for $i\in \left\lbrace 1 ,2 \right\rbrace$ there exist no $\hat{\rho}_i \not = \rho_i$ and $\hat{\sigma}$ such that $\hat{\sigma}$ is utility maximizing for $(\hat{\rho}_i,\rho_{-i})$ and 
        \begin{eqnarray}
        \pi_i(\hat{\rho}_i,\rho_{-i}, \hat{\sigma}) &>& \pi_i(\rho_i,\rho_{-i}, \sigma)                         
        \end{eqnarray}
        
\end{definition} 
Note that when $\rho_1=\rho_2$, users are indifferent between all user strategy profiles, but our equilibrium definition mandates that tie breaking is done in such a way that no provider has an incentive to deviate infinitesimally only to secure himself a different user strategy profile.  
 In practice, a combination of external factors and bounded rationality imply that providers do not move their prices to a tie or even infinitesimally close to each other, at best achieving $\epsilon-$BNEs. Essentially, price differentiations that are too small are not marketable.

\section{Profit Analysis with Conservative Providers}
First, we now analyze the case where both providers are conservative, i.e., restricted to $p_1^\ell = p_2^\ell = 0$. Since they cannot split the market through pricing differences, the resulting game is similar to a classic Bertrand competition. Thus, if the providers' costs for the whole population are symmetric, they cannot extract any profit, while for non-symmetric providers, the provider with lower costs for serving the whole market can potentially extract the cost difference as a profit.

\begin{proposition}\label{PROP:CONST}
	Let both providers be conservative, i.e., $p_1^\ell = p_2^\ell = 0$. 
	W.l.o.g. assume $\cost{1}{0}{\type_{max}} \leq \cost{2}{0}{\type_{max}} $.
	Then in any BNE $(\rho_1,\rho_2,\sigma)$ the following holds: 
	\begin{eqnarray}
		p_1^f = p_2^f \in [\cost{1}{0}{\type_{max}}, \cost{2}{0}{\type_{max}} ]
	\end{eqnarray}
	and 
	\begin{align}
		\pi_1(\rho_1,\rho_2,\sigma) =& p_1^f-  \cost{1}{0}{\type_{max}} \\
		\pi_2(\rho_1,\rho_2,\sigma)=& 0
	\end{align}
	
\end{proposition}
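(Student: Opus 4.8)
The plan is to run a Bertrand-style argument, but exploiting the special structural assumption that each provider's per-unit cost $\costs{j}{\sigma}$ is strictly increasing in the \emph{mean type} of the users it serves. I would carry it out in four steps.

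First, I would establish $p_1^f = p_2^f =: p$ in any BNE. Since both providers are conservative, each posts a type-independent price; so if $p_i^f < p_{-i}^f$, every user strictly prefers provider $i$ and the unique utility-maximizing $\sigma$ sends the whole population to $i$, yielding $\pi_i = p_i^f - \cost{i}{0}{\type_{max}}$, which is strictly increasing in $p_i^f$. Hence $i$ can profitably raise its price toward $p_{-i}^f$ while retaining the entire market, contradicting equilibrium. Applying this to both providers forces $p_1^f = p_2^f$.

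Next I would show that in equilibrium the user split must be degenerate, i.e.\ one provider serves (almost) the whole population. Fix $p$ and let $\sigma$ be the enforceable (hence cutoff) tie-break. In any cutoff with interior $\hat{\type}$, one provider serves the high-type segment $[\hat{\type},\type_{max}]$, whose mean $\mu(\hat{\type},\type_{max})$ strictly exceeds the overall mean $\mu(0,\type_{max})$; because cost is strictly increasing in mean type, that provider's per-unit cost on this segment strictly exceeds $\cost{j}{0}{\type_{max}}$, while it serves only measure $1-F(\hat{\type})<1$. Undercutting to $p-\epsilon$ therefore lets it serve everyone at strictly \emph{lower} average cost \emph{and} larger volume, a double gain, so its profit rises to (approaching) $p-\cost{j}{0}{\type_{max}}$, strictly above its split profit. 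Thus the high-type server always has a profitable deviation, which rules out all interior cutoffs. This is the key observation and, I expect, the main obstacle: it is precisely mean-monotonicity of cost (not split-convexity, which is not needed here) that makes serving the variable tail strictly dominated by serving everyone.

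With a degenerate split, one provider serves the whole market. I would argue it must be provider~$1$: if provider~$2$ served everyone (so $\pi_2=p-\cost{2}{0}{\type_{max}}$, $\pi_1=0$), provider~$1$ could undercut for profit approaching $p-\cost{1}{0}{\type_{max}}\ge p-\cost{2}{0}{\type_{max}}$, profitable unless $p\le\cost{1}{0}{\type_{max}}$; combined with provider~$2$'s own no-deviation condition $p\ge\cost{2}{0}{\type_{max}}$, this collapses to $p=\cost{1}{0}{\type_{max}}=\cost{2}{0}{\type_{max}}$, where both descriptions give the same profits. Hence provider~$1$ serves everyone, yielding $\pi_1=p-\cost{1}{0}{\type_{max}}$ and $\pi_2=0$. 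Finally I pin the price interval: $p\ge\cost{1}{0}{\type_{max}}$, since otherwise $\pi_1<0$ and provider~$1$ prefers raising its price to shed the loss-making market (obtaining $0$); and $p\le\cost{2}{0}{\type_{max}}$, since otherwise provider~$2$, currently at $\pi_2=0$, could undercut to capture the market at profit approaching $p-\cost{2}{0}{\type_{max}}>0$. Two delicate points to handle carefully throughout are the enforceable-tie-break refinement together with the boundary case $p=\cost{1}{0}{\type_{max}}=\cost{2}{0}{\type_{max}}$ (where the identity of the serving provider is not unique but the stated profits still hold), and the fact that every undercutting profit $p-\cost{i}{0}{\type_{max}}$ is only approached as $\epsilon\to 0$; so each deviation argument must be phrased as ``the candidate profit is strictly below this supremum, hence strictly beaten for some small $\epsilon>0$.''
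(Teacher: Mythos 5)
Your proposal follows essentially the same Bertrand logic as the paper's proof: force $p_1^f=p_2^f$ by the overcutting argument, then pin the common price inside $[\cost{1}{0}{\type_{max}},\cost{2}{0}{\type_{max}}]$ via the loss/undercut deviations, with the profit formulas falling out of the degenerate split. The one structural difference is your Step~2 ruling out interior cutoffs. The paper never argues this explicitly; it implicitly relies on the enforceability refinement, under which a tie between two \emph{constant} prices only admits the two degenerate profiles $[0,\type_{max}]_{\rightarrow i}$ as limits of strictly ordered price sequences. Your explicit treatment is arguably the more careful reading (if perturbations by linear prices are admissible, interior cutoffs \emph{are} enforceable at a constant-price tie and do need to be excluded), so this step is a genuine addition rather than redundancy.

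However, that step does not go through exactly as you state it. You claim the high-type server's undercutting profit, approaching $p-\cost{j}{0}{\type_{max}}$, is ``strictly above its split profit'' $(1-F(\hat{\type}))\,(p-\cost{j}{\hat{\type}}{\type_{max}})$. This holds when $p>\cost{j}{0}{\type_{max}}$, but fails for small $p$: with $p=0$, $\cost{j}{0}{\type_{max}}=1$, $\cost{j}{\hat{\type}}{\type_{max}}=1.5$ and $1-F(\hat{\type})=0.1$, the split profit is $-0.15$ while the whole-market profit is $-1$, so undercutting is strictly \emph{worse}. The ``double gain'' intuition breaks on negative margins, where shrinking volume shrinks a loss. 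The conclusion survives: in that regime the high-type server's split profit is strictly negative (since $\cost{j}{\hat{\type}}{\type_{max}}>\cost{j}{0}{\type_{max}}\geq p$), so it profitably deviates by raising its price and serving nobody --- the exit deviation you already invoke in your final step. Add that case split and the argument is complete.
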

\begin{proof}
	
 Note that any  tuple $(\rho_1,\rho_2,\sigma)$ with  
 $p_1^f = p_2^f \in [\cost{1}{0}{\type_{max}}, \cost{2}{0}{\type_{max}} ]$
 and $\sigma = [0,\type_{max}]_{\rightarrow 1}$ is a BNE as neither provider has an advantageous deviation. All users already choose provider $1$, so decreasing his price only reduces his profit, while any price increase makes him lose all users. Since all users choose him, his profit is simply $ \pi_1(\rho_1,\rho_2,\sigma) = p_1^f-  \cost{1}{0}{\type_{max}}$.
 Any lower price for provider $2$ on the other hand would lead to all users choosing him, but he would make a negative profit per user. Increasing his price would have no effect on his profit, as no users choose him. Without users, he trivially makes zero profit. 
 As a special case, when  $\cost{1}{0}{\type_{max}}= \cost{2}{0}{\type_{max}}$, then, by the same argument, $p_1^f = p_2^f = \cost{2}{0}{\type_{max}}$ with  $\sigma = [0,\type_{max}]_{\rightarrow 2}$  is an equilibrium as well, with both providers obtaining zero profit.  
 
 We now show that these are the only BNEs, by showing that any other ``potential BNE'' leads to a contradiction. 
 First, note that when $p_1^f < p_2^f$, every user strictly prefers provider $1$. But there always exists a $\hat{p}_1^f$ with $p_1^f < \hat{p}_1^f< p_2^f$ for which every user still strictly prefers provider $1$, but with a higher payment. Therefore, $p_1^f=p_2^f$ has to hold in equilibrium. Now, if $p_1^f = p_2^f < \cost{1}{0}{\type_{max}}$, provider $1$ would make a loss for every user. On the other hand, if $p_1^f = p_2^f > \cost{2}{0}{\type_{max}}$, then for any $\sigma= [0,\type_{max}]_{\rightarrow i}$, the other provider $-i$ would have an advantageous deviation in any $\hat{p}_{-i}^f$ with $\cost{2}{0}{\type_{max}} < \hat{p}_{-i}^f< p_1^f=p_2^f$. Taken together, this means that there can be no other BNE than those characterized by the proposition. 
\end{proof}

Going forward, we denote these BNEs with two conservative providers as \emph{constant BNEs}.
It should be noted that while all $p_1^f=p_2^f \in [\cost{1}{0}{\type_{max}}, \cost{2}{0}{\type_{max}} ]$ are equilibrium prices, any $p_1^f=p_2^f <\cost{2}{0}{\type_{max}}$ only does not lead to a loss for provider $2$ because he obtains no users. This makes $p_1^f = \cost{2}{0}{\type_{max}}$ the only non-pathological equilibrium. 

\section{Profit Analysis with Innovative Providers}
In this section, we  analyze the profit when either only one or both providers are innovative.  
As a first step, we bound the profit in any best response by the optimal profit a provider could attain if he could unilaterally choose the tie-breaking rule between utility-maximizing user strategy profiles.
 \begin{proposition}\label{LEM:BESTRESPONSE}
        Assume provider $2$ plays strategy $\rho_2= (p_2^f,p_2^\ell)$. Then, for any $\rho_1\not= \rho_2$ and any $\sigma$ that is utility maximizing for $\rho_1,\rho_2,$ it holds that  $\pi_1(\rho_1,\rho_2,\sigma)=0$ or $\pi_1(\rho_1,\rho_2,\sigma) < \pi_1(\rho_2,\rho_2,\sigma)$. 
 \end{proposition}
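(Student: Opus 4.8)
The plan is to compare the two profit expressions directly, exploiting that they share the \emph{same} user strategy profile $\sigma$. Since the cost $\costs{1}{\sigma}$ depends only on $\sigma$, which is held fixed across both expressions, the cost terms cancel upon subtraction and we are left with
\begin{align}
\pi_1(\rho_1,\rho_2,\sigma) - \pi_1(\rho_2,\rho_2,\sigma) = \int_{\type:\sigma(\type)=1} \bigl(\rho_1(\type) - \rho_2(\type)\bigr) f(\type)\, d\type. \nonumber
\end{align}
Thus the entire claim reduces to controlling the sign of this single integral over the set of users that $\sigma$ assigns to provider $1$.

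Next I would invoke utility maximization. Because $\sigma$ is payment-minimizing for $(\rho_1,\rho_2)$, every user with $\sigma(\type)=1$ weakly prefers provider $1$, i.e. $\rho_1(\type) \le \rho_2(\type)$ on $\{\type:\sigma(\type)=1\}$. Hence the integrand is everywhere nonpositive, the integral is $\le 0$, and we already obtain the weak inequality $\pi_1(\rho_1,\rho_2,\sigma) \le \pi_1(\rho_2,\rho_2,\sigma)$.

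It remains to split into the two advertised cases and upgrade the weak inequality to a strict one in the second. If $\int_{\type:\sigma(\type)=1} f(\type)\,d\type = 0$, then provider $1$ serves a null mass of users, so $\pi_1(\rho_1,\rho_2,\sigma)=0$ and we land in the first case. Otherwise the set $\{\type:\sigma(\type)=1\}$ carries positive $f$-mass, and here I would use the linearity of the price functions: since $\rho_1\neq\rho_2$ are two distinct affine maps in $\type$, they coincide in at most one point and therefore $\rho_1(\type) < \rho_2(\type)$ for almost every $\type$ in the customer set. The integrand is then strictly negative on a set of positive measure, which forces the integral to be strictly negative and yields $\pi_1(\rho_1,\rho_2,\sigma) < \pi_1(\rho_2,\rho_2,\sigma)$.

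The main obstacle is the measure-theoretic bookkeeping in this last step: I must carefully (i) distinguish \emph{zero customer mass} (which gives exactly zero profit, the first case) from \emph{positive customer mass} (which gives the strict inequality), and (ii) justify that two distinct affine price functions can agree only on a Lebesgue-null set, so that the almost-everywhere strict inequality survives integration against $f$. Everything else is an elementary cancellation together with the monotonicity observation supplied by utility maximization.
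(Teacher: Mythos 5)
Your proof is correct and follows essentially the same route as the paper's: fix the customer set determined by $\sigma$, note that the cost term $\costs{1}{\sigma}$ is identical in both profit expressions, and observe that every served user pays weakly less under $\rho_1$ than under $\rho_2$ (strictly, except at the single crossing point of the two affine price functions), which forces the revenue comparison. If anything, your pointwise formulation is slightly cleaner than the paper's explicit substitution $p_1^f=p_2^f+\hat{\type}p_2^\ell-\hat{\type}p_1^\ell$ at the cutoff, and your explicit separation of the zero-mass case is a welcome piece of rigor the paper leaves implicit.
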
      
 \begin{proof}  
        First note that for any $\rho_1\not=\rho_2,$ by the linearity of the price function, there exists a unique  $\hat{\type}$ at which payments in both markets are the same, i.e., $p_1^f+\hat{\type} p_1^\ell =p_2^f+\hat{\type} p_2^\ell$. For any utility maximizing $\sigma,$ users with type below $\hat{\type}$ choose one provider and those above  the other. 
        Now consider some $\rho_1\not= \rho_2$, $\pi_1(\rho_1,\rho_2,\sigma)>0$ and $\hat{\type}$ to be the corresponding cutoff.  
        If $p_1^f \leq p_2^f$, $p_1^\ell \geq p_2^\ell$, then any utility-maximizing user strategy profile is of the form  $\sigma=[0,\hat{\type}]_{\rightarrow 1}$ and it holds
        \begin{align}
        &\pi_1(\rho_1,\rho_2, \sigma)
        =  \int_{0}^{\hat{\type}} f({\type})(p_1^f+\type p_1^\ell - \cost{1}{0}{\hat{\type}} ) d{\type} \\
        =&  \int_{0}^{\hat{\type}} f({\type})(p_2^f + \hat{\type} p_2^\ell -\hat{\type} p_1^\ell+ \type p_1^\ell - \cost{1}{0}{\hat{\type}} ) d{\type}\\
        <& \int_{0}^{\hat{\type}} f({\type})(p_2^f + \hat{\type} p_2^\ell - \cost{1}{0}{\hat{\type}} ) d{\type}\\
        =&\pi_1(\rho_2,\rho_2, \sigma)\end{align}       
        Similarly for $p_1^f \geq p_2^f$, $p_1^\ell \leq p_2^\ell$. 
 \end{proof}
 
 It directly follows that both providers play the same strategy in any BNE, even if one provider is conservative.  
 \begin{corollary}\label{COR:EQUALSTRAT}
        In any BNE it holds  that $\rho_1=\rho_2$. 
 \end{corollary}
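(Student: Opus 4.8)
The plan is to argue by contradiction. Suppose $(\rho_1,\rho_2,\sigma)$ is a BNE with $\rho_1\neq\rho_2$; I will show this actually forces $\rho_1=\rho_2$. Since $\rho_1\neq\rho_2$, Proposition~\ref{LEM:BESTRESPONSE} applies: the users split at a single cutoff $\hat{\type}$, so the market $[0,\type_{max}]$ is partitioned between the providers and at least one of them serves a set of strictly positive measure. Relabeling the providers if necessary (the claim is symmetric in the two indices), I may assume provider $1$ serves the set $S_1=\{\type:\sigma(\type)=1\}$ with $F$-measure $>0$. Running the whole argument on this positive-measure provider is precisely what lets me sidestep the degenerate situation in which a provider attracts no users.

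First I would pin down provider $1$'s profit. Proposition~\ref{LEM:BESTRESPONSE} gives that either $\pi_1(\rho_1,\rho_2,\sigma)=0$ or $\pi_1(\rho_1,\rho_2,\sigma)<\pi_1(\rho_2,\rho_2,\sigma)$. The strict-inequality branch is impossible in a BNE: provider $1$ could deviate to $\hat{\rho}_1=\rho_2$ and, since every user strategy profile (in particular $\sigma$) is utility-maximizing once both providers play $\rho_2$, secure the profit $\pi_1(\rho_2,\rho_2,\sigma)$, contradicting the optimality of $\rho_1$. Hence $\pi_1(\rho_1,\rho_2,\sigma)=0$, and the very same no-deviation observation yields the one-sided bound $\pi_1(\rho_2,\rho_2,\sigma)\le 0$.

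Next I would compare these two profits head to head. Both use the \emph{same} served set $S_1$ and the \emph{same} cost $\costs{1}{\sigma}$ (costs depend only on $\sigma$), so they differ solely in the price charged on $S_1$: $\pi_1(\rho_2,\rho_2,\sigma)-\pi_1(\rho_1,\rho_2,\sigma)=\int_{S_1} f(\type)\,(\rho_2(\type)-\rho_1(\type))\,d\type$. On his own segment every user weakly prefers provider $1$, i.e.\ $\rho_1(\type)\le\rho_2(\type)$ for $\type\in S_1$, so this integral is nonnegative and therefore $\pi_1(\rho_2,\rho_2,\sigma)\ge 0$. Combined with the bound $\pi_1(\rho_2,\rho_2,\sigma)\le 0$ from the previous step, the integral is squeezed to exactly zero. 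As its integrand is nonnegative and $f>0$, this forces $\rho_1=\rho_2$ almost everywhere on $S_1$; since both price functions are affine and $S_1$ has positive measure, they must then coincide everywhere, giving $\rho_1=\rho_2$ and the desired contradiction.

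The step I expect to demand the most care is verifying that the deviation ``play $\hat{\rho}_1=\rho_2$ and realize the profile $\sigma$'' is legitimate under the equilibrium definition. This rests on the tie-breaking convention of the model: when both providers play $\rho_2$ all users are indifferent, so $\sigma$ is utility-maximizing and enforceable, and the BNE condition therefore genuinely compares $\pi_1(\rho_1,\rho_2,\sigma)$ with $\pi_1(\rho_2,\rho_2,\sigma)$. The remaining delicate point is the passage from almost-everywhere equality on $S_1$ to equality everywhere, which is exactly where affinity of the price functions (two affine maps agreeing on a positive-measure set are identical) and positivity of $f$ on $S_1$ are used.
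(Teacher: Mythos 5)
Your proof is correct, and its core is the same deviate-to-match argument via Proposition~\ref{LEM:BESTRESPONSE} that the paper's one-line ``directly follows'' relies on. The added value of your write-up is that you explicitly close the branch the paper elides: Proposition~\ref{LEM:BESTRESPONSE} leaves open the possibility of a BNE with $\rho_1\neq\rho_2$ in which both providers earn exactly zero, and the deviation-to-$\rho_2$ argument alone does not immediately dispose of it. Your revenue comparison on the positive-measure segment $S_1$ (where utility maximization gives $\rho_1(\type)\le\rho_2(\type)$) squeezes $\pi_1(\rho_2,\rho_2,\sigma)$ to zero and then uses affinity to force $\rho_1=\rho_2$, which is exactly the missing step. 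One small simplification you could make: two distinct affine functions agree at no more than one point, so on $S_1$ minus a single point you in fact have $\rho_1(\type)<\rho_2(\type)$ strictly, hence $\pi_1(\rho_2,\rho_2,\sigma)>\pi_1(\rho_1,\rho_2,\sigma)$ outright and the deviation is strictly profitable without any squeezing or almost-everywhere bookkeeping. You are also right that the legitimacy of the deviation hinges on the tie-breaking convention ($\sigma$ is a cutoff profile, hence enforceable under $(\rho_2,\rho_2)$), which is the one modeling point worth flagging and which you flag correctly.
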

We now separately consider the cases where either only one or both providers are willing to employ linear pricing. 
\subsection{One Provider is Innovative}
Since all providers have to play the same strategy in any BNE, only one provider being innovative  precludes the existence of a BNE unless both providers' cost functions are very close to each other.
 \begin{theorem}\label{COR:PRICEEQUI}
 Let provider $2$  be conservative, i.e., $p_2^\ell = 0$ and provider $1$ be innovative.     There exists a BNE if and only if there exists no $\hat\type \in [0,\type_{max}]$ with 
 \begin{align} 
 & \cost{1}{0}{\type_{max}}-F(\hat t) \cost{1}{0}{\hat t}\\
 \geq & (1-F(\hat t))(\cost{1}{0}{\type_{max}}- \cost{2}{0}{\type_{max}}) 
 \end{align}
 If a BNE exists, it is equal to a constant BNE. 
 \end{theorem}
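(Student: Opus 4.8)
The plan is to reduce the problem to constant profiles using Corollary~\ref{COR:EQUALSTRAT}, and then to determine exactly when such a profile withstands the innovative provider's move into genuinely variance-based prices. The easy half comes first: by Corollary~\ref{COR:EQUALSTRAT}, any BNE has $\rho_1=\rho_2$, and since provider~$2$ is conservative, $\rho_2=(p_2^f,0)$, forcing $\rho_1=(p_2^f,0)$ as well; thus $p_1^\ell=0$ and both providers charge the same constant price. Any BNE is therefore a constant profile, which together with Proposition~\ref{PROP:CONST} already settles the claim that a BNE, if it exists, coincides with a constant BNE. What remains is purely an existence question: for which cost functions is some constant profile immune to provider~$1$ deviating to a nonzero slope $p_1^\ell$?

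Next I would fix a candidate constant profile $\rho_1=\rho_2=(p^f,0)$ and collect the constraints. The conservative deviations are exactly those analysed in Proposition~\ref{PROP:CONST}: individual rationality and mutual non-undercutting pin the served population to the lower-cost provider and confine $p^f$ to $[\cost{1}{0}{\type_{max}},\cost{2}{0}{\type_{max}}]$, a nonempty range only when $\cost{1}{0}{\type_{max}}\le\cost{2}{0}{\type_{max}}$; in that case provider~$1$ serves everyone at profit $p^f-\cost{1}{0}{\type_{max}}$. The genuinely new threat is \emph{cream-skimming}: provider~$1$ chooses a small positive slope and an intercept so that the indifference type of Proposition~\ref{LEM:BESTRESPONSE} equals a target $\hat\type$, thereby capturing exactly the cheap low types $[0,\hat\type]$. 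By Proposition~\ref{LEM:BESTRESPONSE} the deviation profit stays strictly below $F(\hat\type)\,(p^f-\cost{1}{0}{\hat\type})$, yet by letting $p_1^\ell\downarrow 0$ the per-unit payments converge to $p^f$ and the profit approaches this value. Hence a profitable cream-skim exists precisely when $F(\hat\type)\,(p^f-\cost{1}{0}{\hat\type})>p^f-\cost{1}{0}{\type_{max}}$ for some $\hat\type$. Increasing $p^f$ makes this inequality harder to satisfy (its right-hand side grows at unit rate while the left-hand side grows only at rate $F(\hat\type)<1$), so the candidate is most robust at the largest admissible price $p^f=\cost{2}{0}{\type_{max}}$; substituting this value and rearranging yields the displayed inequality, whose negation over all $\hat\type$ is then equivalent to the existence of a BNE.

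The main obstacle is the accounting for a supremum that is approached but never attained. Proposition~\ref{LEM:BESTRESPONSE} supplies only a \emph{strict} upper bound, so I must argue from the enforceability/tie-breaking convention that provider~$1$ can realise profit arbitrarily close to $F(\hat\type)\,(p^f-\cost{1}{0}{\hat\type})$; this is exactly what makes a \emph{strict} violation of the inequality a genuine deviation, whereas a mere equality (as occurs trivially at $\hat\type=\type_{max}$) does not destabilise the profile. Two secondary checks finish the plan: verifying that skimming the expensive high types $[\hat\type,\type_{max}]$ is dominated by the low-type deviation and so adds no constraint, and treating the degenerate regime $\cost{1}{0}{\type_{max}}>\cost{2}{0}{\type_{max}}$, where no admissible constant price exists and the condition is correspondingly satisfied.
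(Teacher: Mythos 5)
Your proposal follows the same route as the paper's proof: Corollary~\ref{COR:EQUALSTRAT} forces $\rho_1=\rho_2$, and since provider $2$ is conservative this makes any BNE a constant profile, which Proposition~\ref{PROP:CONST} identifies with a constant BNE; existence then reduces, via Proposition~\ref{LEM:BESTRESPONSE} and the enforceability convention, to whether provider $1$ can profitably cream-skim an interval $[0,\hat\type]$ at a profit approaching $F(\hat\type)\bigl(p^f-\cost{1}{0}{\hat\type}\bigr)$. On the points where you are more explicit than the paper (the supremum being approached but not attained, the dominance of high-type skimming, monotonicity of the stability condition in $p^f$), your reasoning is sound.

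The one step that does not go through as written is the final rearrangement. Substituting $p^f=\cost{2}{0}{\type_{max}}$ into $F(\hat\type)\bigl(p^f-\cost{1}{0}{\hat\type}\bigr)\geq p^f-\cost{1}{0}{\type_{max}}$ gives
\begin{align}
\cost{1}{0}{\type_{max}}-F(\hat\type)\,\cost{1}{0}{\hat\type}\;\geq\;\bigl(1-F(\hat\type)\bigr)\,\cost{2}{0}{\type_{max}},
\end{align}
whereas the theorem's right-hand side is $\bigl(1-F(\hat\type)\bigr)\bigl(\cost{1}{0}{\type_{max}}-\cost{2}{0}{\type_{max}}\bigr)$; these are not the same, so your claim that the rearrangement ``yields the displayed inequality'' is false as stated. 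The paper's own proof never carries out this algebra, so it does not adjudicate; but note that the printed condition holds trivially at $\hat\type=0$ (it reduces to $\cost{2}{0}{\type_{max}}\geq 0$) and at $\hat\type=\type_{max}$, so taken literally the theorem would assert that a BNE never exists. Your derived form is the one consistent with the deviation analysis, and the discrepancy is almost certainly an error in the statement rather than in your argument --- but you should flag the mismatch rather than assert agreement. A second, smaller slip: when $\cost{1}{0}{\type_{max}}>\cost{2}{0}{\type_{max}}$, admissible constant prices do exist (Proposition~\ref{PROP:CONST} applies with the roles swapped and provider $2$ serving everyone), so that case is not degenerate in the way you describe and still requires the cream-skimming check for the innovative provider $1$.
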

\begin{proof}
By Corollary \ref{COR:EQUALSTRAT}, in any BNE both providers will use constant pricing functions and for any two constant pricing functions that are not part of a constant BNE there exist constant deviations. By Proposition \ref{LEM:BESTRESPONSE}, any response of provider $1$ is worse than him freely choosing the user strategy profile without changing his pricing function. He can enforce any user strategy profile where he only obtains users with types lower than $\hat{t}$, so any such user strategy profile has to give him weakly lower profit than taking the whole market or he will do so for some $\hat{t} < \type_{max}$. If provider $1$ gets the lower part of the market, the users that choose provider $2$  have a higher average variance, and he therefore has a higher cost per user, than if he had the whole market. Provider $2$ therefore either has to make negative profit (and thus will deviate) or can increases his profit by slightly decreasing his price $p_2^f$ and taking the whole market.   
 \end{proof}
 
 This means that usually, providers have an incentive to become innovative if their competition is conservative, but doing so and myopically optimizing profit leads to instability. 
 Instead, an innovative provider has to be aware of their market power and utilize it in order to obtain a profit increase.

Doing so, provider $1$ can guarantee himself strictly positive payoff whenever there is any interval of users for which his costs are lower than provider $2$'s costs for all users.  
\begin{theorem}\label{PROP:POSPROF}
                Let provider $2$  be conservative, i.e., $p_2^\ell = 0$. If there exists 
                $0<\bar{\type} < \type_{max}$ with
                \begin{align}
                &\cost{1}{0}{\bar{\type}} < \cost{2}{0}{\type_{max}}\\
              \text{and}\,\,\,  &\cost{1}{0}{\bar{\type}} < \cost{1}{0}{\type_{max}} < 2\cost{1}{0}{\bar{\type}},
                \end{align}   then there exists a strategy $\rho_1$  with $p_1^\ell >0$ that  guarantees provider $1$ a non-negative payoff for any $p_2^f$ and a positive payoff for any individually rational response of provider $2$.
\end{theorem}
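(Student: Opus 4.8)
The plan is to exhibit a single explicit ``safe'' linear rule for provider $1$ and then verify the two guarantees about it separately. I would take
$p_1^f = \cost{1}{0}{\bar\type}$ and $p_1^\ell = \bigl(\cost{1}{0}{\type_{max}} - \cost{1}{0}{\bar\type}\bigr)/\mu(0,\bar\type)$, which is strictly positive exactly because $\cost{1}{0}{\bar\type} < \cost{1}{0}{\type_{max}}$. Since $p_1^\ell>0=p_2^\ell$, for every $p_2^f$ the unique utility-maximizing profile splits the market at the crossing point $\hat\type$ solving $p_1^f + p_1^\ell\hat\type = p_2^f$, with provider $1$ serving $[0,\hat\type]$ (and the obvious degenerate cases when $\hat\type\notin[0,\type_{max}]$). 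Writing provider $1$'s profit as a function of the induced cutoff, $\Pi_1(\hat\type) = F(\hat\type)\bigl(p_1^f + p_1^\ell\,\mu(0,\hat\type) - \cost{1}{0}{\hat\type}\bigr)$, the statement reduces to two claims: that $\Pi_1(\hat\type)\ge 0$ for all $\hat\type\in[0,\type_{max}]$, and that every individually rational $p_2^f$ induces a cutoff with $\Pi_1(\hat\type)>0$.

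For non-negativity I would split on whether $\hat\type\le\bar\type$ or $\hat\type\ge\bar\type$ and use only that $\cost{1}{0}{\cdot}$ and $\mu(0,\cdot)$ are increasing (the former from the monotonicity-in-average-type assumption, the latter because extending the interval adds above-mean types). On $[0,\bar\type]$ we have $\cost{1}{0}{\hat\type}\le\cost{1}{0}{\bar\type}=p_1^f$, so the bracketed term is at least $p_1^\ell\,\mu(0,\hat\type)\ge 0$. On $[\bar\type,\type_{max}]$ we have $\cost{1}{0}{\hat\type}\le\cost{1}{0}{\type_{max}}$, while $p_1^f + p_1^\ell\,\mu(0,\hat\type)\ge p_1^f + p_1^\ell\,\mu(0,\bar\type)=\cost{1}{0}{\type_{max}}$ by the choice of $p_1^\ell$, so the bracket is again non-negative. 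The same two estimates show the bracket is \emph{strictly} positive whenever $\hat\type>0$, so $\Pi_1(\hat\type)>0$ on all of $(0,\type_{max}]$.

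It then remains only to rule out the single unprofitable case $\hat\type=0$ (provider $1$ attracts no users) under individual rationality of provider $2$, which is where I would use the first hypothesis. If provider $2$ serves $[\hat\type,\type_{max}]$, his profit is $(1-F(\hat\type))(p_2^f-\cost{2}{\hat\type}{\type_{max}})$, so individual rationality forces $p_2^f\ge\cost{2}{\hat\type}{\type_{max}}\ge\cost{2}{0}{\type_{max}}$, the last step again by monotonicity in average type. Since $\cost{2}{0}{\type_{max}}>\cost{1}{0}{\bar\type}=p_1^f$, any individually rational $p_2^f$ strictly exceeds $p_1^f$, which forces $\hat\type=(p_2^f-p_1^f)/p_1^\ell>0$ and hence $\Pi_1>0$ by the previous paragraph; the degenerate response in which provider $2$ keeps no users is covered because $\Pi_1(\type_{max})>0$.

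I expect the main obstacle to be the non-negativity guarantee on the \emph{entire} intermediate range of cutoffs rather than at a few test points: a priori $\cost{1}{0}{\hat\type}$ is unknown for $\hat\type\in(\bar\type,\type_{max})$, and the argument goes through only because monotonicity pins this cost below $\cost{1}{0}{\type_{max}}$ while the average price $p_1^f+p_1^\ell\mu(0,\hat\type)$ has already risen to $\cost{1}{0}{\type_{max}}$ at $\bar\type$. The verification above leans on $\cost{1}{0}{\bar\type}<\cost{1}{0}{\type_{max}}$ (to make the slope positive) and on $\cost{1}{0}{\bar\type}<\cost{2}{0}{\type_{max}}$ (to force a positive cutoff under individual rationality); the factor-$2$ bound $\cost{1}{0}{\type_{max}}<2\cost{1}{0}{\bar\type}$ does not seem to be consumed by this particular safe strategy, and I would expect it to become relevant only for a more aggressive, lower-price construction.
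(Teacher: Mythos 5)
Your proof is correct and follows the same blueprint as the paper's: fix $p_1^f=\cost{1}{0}{\bar{\type}}$, pick a strictly positive slope, show $\Pi_1(\hat{\type})\geq 0$ for every induced cutoff by splitting at $\bar{\type}$, and then use $\cost{1}{0}{\bar{\type}}<\cost{2}{0}{\type_{max}}$ to argue that any individually rational $p_2^f$ forces a strictly positive cutoff (or cedes the whole market). The one genuine difference is the slope. The paper sets $p_1^\ell=\cost{1}{0}{\bar{\type}}/\int_0^{\bar{\type}}f(\type)\type\,d\type$, i.e.\ ties the slope to the fixed price, and then must invoke the hypothesis $\cost{1}{0}{\type_{max}}<2\cost{1}{0}{\bar{\type}}$ to keep the integrand positive when $\hat{\type}>\bar{\type}$ (the bound there reduces to $2\cost{1}{0}{\bar{\type}}-\cost{1}{0}{\hat{\type}}>0$). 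You instead calibrate the slope to the cost gap, $p_1^\ell=(\cost{1}{0}{\type_{max}}-\cost{1}{0}{\bar{\type}})/\mu(0,\bar{\type})$, so that the average price already reaches $\cost{1}{0}{\type_{max}}$ at cutoff $\bar{\type}$ and monotonicity of $\cost{1}{0}{\cdot}$ and $\mu(0,\cdot)$ finishes the upper range with no further assumption. Your closing observation is right: under your construction the factor-$2$ bound is never consumed, so you in fact prove the statement under strictly weaker hypotheses; it is an artifact of the paper's particular choice of $p_1^\ell$, not of the theorem. (Both proofs share the same implicit regularity assumptions, e.g.\ $F(\bar{\type})>0$ so that $\mu(0,\bar{\type})$ is well defined, so nothing is lost there.)
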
           

\begin{proof}   
        Note that for any $p_1^f <\cost{2}{0}{\type_{max}}$, $p_1^\ell>0$, provider $2$ can only achieve a positive profit if he plays $p_2^f= p_1^f+\hat{\type} p_1^\ell$ for some $\hat{\type}>0$. To guarantee a positive payoff for provider $1$ with such a $p_1^f$ it therefore suffices that 
        \begin{align}
        &\pi_1(\rho_1,\rho_2,\hat{\type}) 
        =  \int_{0}^{\hat{\type}} f({\type})(p_1^f+\type p_1^\ell - \cost{1}{0}{\hat{\type}} ) d{\type} >0     \end{align}     
        for all $\hat{\type}$. 
        For 
        $0<\bar{\type} < \type_{max}$ with
       $\cost{1}{0}{\bar{\type}} < \cost{2}{0}{\type_{max}}$
        and
        \begin{align}
        c_1(0,\bar{\type}) < c_1(0,\type_{max}) < 2c_1(0,\bar{\type}).
        \end{align}      choose $\rho_1$ such that 
        $
        p_1^f = \cost{1}{0}{\bar{\type}}
        $ 
        and $p_1^\ell = \frac{\cost{1}{0}{\bar{\type}}}{\int_{0}^{\bar{\type}} f(\type){\type} d\type}$. 
        Then  it holds for $\hat{\type}\leq \bar{\type}$:
        \begin{align} 
       \pi_1(\rho_1,\rho_2,\hat{\type}) = &\int_{0}^{\hat{\type}} f({\type})(p_1^f+\type p_1^\ell - \cost{1}{0}{\hat{\type}} ) d{\type} \\
        >&  \int_{0}^{\hat{\type}} f({\type})(p_1^f+\type p_1^\ell - \cost{1}{0}{\bar{\type}} ) d{\type} \\   
        =&  \int_{0}^{\hat{\type}} f({\type})(\type p_1^\ell) d{\type} \\
        >&0
        \end{align}
%
   For $\hat{\type} > \bar{\type}$ it holds 
       \begin{align} 
      &\pi_1(\rho_1,\rho_2,\hat{\type})\\
       =&\int_{0}^{\hat{\type}} f({\type})(p_1^f+\type p_1^\ell - \cost{1}{0}{\hat{\type}} ) d{\type} \\
       =&  \int_{0}^{\hat{\type}} f({\type}) (\type \frac{\cost{1}{0}{\bar{\type}}}{\int_{0}^{\bar{\type}} f(\type){\type} d\type} - \cost{1}{0}{\hat{\type}}-\cost{1}{0}{\bar{\type}} ) d{\type} \\   
       >& \int_{0}^{\hat{\type}} f({\type}) (\cost{1}{0}{\bar{\type}} - \cost{1}{0}{\hat{\type}}-\cost{1}{0}{\bar{\type}} ) d{\type} \\
       =& \int_{0}^{\hat{\type}} f({\type}) (2\cost{1}{0}{\bar{\type}}-\cost{1}{0}{\hat{\type}}  d{\type} \\    
       >&0
       \end{align}
       Therefore, $\rho_1$ guarantees provider $1$ positive profit for $\sigma = [0,\hat{\type}]_{\rightarrow 1}$ with any $\hat{\type} \in [0,\type_{max}]$.
\end{proof}     

This means that provider $1$ can enforce positive profit, which is especially attractive if he could only obtain zero profit as a conservative provider. As we can see from the proof, a possible such pricing strategy with positive payoff is given by 
        $p_1^f = \cost{1}{0}{\bar{\type}}$
        and $p_1^\ell = \frac{\cost{1}{0}{\bar{\type}}}{\int_{0}^{\bar{\type}} f(\type){\type} d\type}$.
        
        This still leaves the question whether a provider that obtained positive profit as a conservative provider should also become innovative. As the following proposition shows, the answer is usually yes. 
\begin{theorem}\label{PROP:BETTERPROF}
        \todo[inline]{Note:Needs that cost is strictly increasing at $(0,sigma_{max})$, doublecheck model in the end}
                        Let provider $2$  be conservative, i.e., $p_2^\ell = 0$.
                        If provider $1$ obtains strictly positive profit in some constant BNE then there exists a strategy $\rho_1$  with $p_1^\ell >0$ that, for any individually rational response of provider $2$, guarantees provider $1$ greater profit than in any constant BNE.

\end{theorem}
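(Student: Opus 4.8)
The plan is to reduce the claim to beating a single number. By Proposition~\ref{PROP:CONST}, provider~$1$ earns a strictly positive profit in some constant BNE only if $\cost{1}{0}{\type_{max}} < \cost{2}{0}{\type_{max}}$, and the largest profit he attains across all constant BNEs is $\Pi := \cost{2}{0}{\type_{max}} - \cost{1}{0}{\type_{max}} > 0$. Hence it suffices to exhibit one strategy $\rho_1=(p_1^f,p_1^\ell)$ with $p_1^\ell>0$ for which $\pi_1 > \Pi$ against every individually rational response of provider~$2$. Write $\mu := \mu(0,\type_{max})$ for the overall mean type, which I take to be strictly positive (otherwise all users have type $0$ and variance-based pricing is vacuous).

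The construction I would use is a \emph{pivot} price that crosses the best constant price $\cost{2}{0}{\type_{max}}$ at a low type. Fix a pivot $t_0\in(0,\mu)$, pick a slope $p_1^\ell>0$, and set $p_1^f = \cost{2}{0}{\type_{max}} - t_0\,p_1^\ell$, so that provider~$1$'s payment equals $\cost{2}{0}{\type_{max}}$ exactly at $t_0$, lies below it for $\type<t_0$, and above it for $\type>t_0$. Two features drive the argument. First, $p_1^f < \cost{2}{0}{\type_{max}}$, so provider~$2$ can never profitably undercut to take the whole market: any $p_2^f\le p_1^f$ leaves him the negative margin $p_2^f-\cost{2}{0}{\type_{max}}<0$. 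Second, if provider~$1$ keeps the entire market his profit is $p_1^f + p_1^\ell\mu - \cost{1}{0}{\type_{max}} = \Pi + p_1^\ell(\mu-t_0) > \Pi$, precisely because $t_0<\mu$.

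The heart of the proof is to show that, for a sufficiently small slope, provider~$2$ cannot profitably serve \emph{any} positive mass of users, so his only individually rational responses cede the whole market to provider~$1$. Since provider~$1$'s payment is increasing in $\type$ while provider~$2$'s is flat, any response of provider~$2$ peels off a top interval $[\hat\type,\type_{max}]$ at price $p_2^f = p_1^f+\hat\type p_1^\ell = \cost{2}{0}{\type_{max}} + (\hat\type - t_0)p_1^\ell$, earning $(1-F(\hat\type))\big(p_2^f-\cost{2}{\hat\type}{\type_{max}}\big)$, which is nonnegative only if
\begin{align}
(\hat\type - t_0)\,p_1^\ell \;\ge\; \cost{2}{\hat\type}{\type_{max}} - \cost{2}{0}{\type_{max}}. \nonumber
\end{align}
For $\hat\type\le t_0$ the left side is nonpositive while the right side is positive, so such responses are infeasible; for $\hat\type\in(t_0,\type_{max})$ I would bound the right side below by $\cost{2}{t_0}{\type_{max}} - \cost{2}{0}{\type_{max}}>0$, since removing the low types $[0,t_0]$ strictly raises the average type served and hence, by strict monotonicity of cost, the per-unit cost. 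Choosing
\begin{align}
0 < p_1^\ell < \frac{\cost{2}{t_0}{\type_{max}} - \cost{2}{0}{\type_{max}}}{\type_{max} - t_0} \nonumber
\end{align}
then makes the displayed inequality fail for every $\hat\type\in(t_0,\type_{max})$, so every positive-mass interval yields provider~$2$ a strictly negative profit. Consequently every individually rational response leaves him zero users, provider~$1$ serves $[0,\type_{max}]$, and his profit is $\Pi + p_1^\ell(\mu-t_0)>\Pi$.

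I expect this last emptiness argument to be the main obstacle, and it is exactly where the hypothesis that cost is strictly increasing at the full market is indispensable: it is what makes $\cost{2}{t_0}{\type_{max}} - \cost{2}{0}{\type_{max}}$ strictly positive and thus leaves a genuine open interval of admissible slopes $p_1^\ell$. I would also handle the degenerate endpoints of the case analysis over $\hat\type$ (the tie at $\hat\type=0$, where provider~$2$ would still face margin $p_1^f-\cost{2}{0}{\type_{max}}<0$, and the empty top interval at $\hat\type=\type_{max}$), and remark that, because $\rho_1\neq\rho_2$, the utility-maximizing assignment is unique up to a null set, so tie-breaking cannot rescue an individually irrational response.
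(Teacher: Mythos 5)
Your proof is correct and takes essentially the same approach as the paper's: set the intercept slightly below $\cost{2}{0}{\type_{max}}$, keep the slope $p_1^\ell>0$ small enough that provider $2$'s margin $p_1^f+\hat{\type}p_1^\ell-\cost{2}{\hat{\type}}{\type_{max}}$ is strictly negative at every cutoff $\hat{\type}$ (so no individually rational response wins him any users), and ensure the average payment still exceeds $\cost{2}{0}{\type_{max}}$ so that the full-market profit strictly beats $\cost{2}{0}{\type_{max}}-\cost{1}{0}{\type_{max}}$. The only substantive difference is how the slope bound is obtained: you fix a pivot $t_0<\mu(0,\type_{max})$ and use the uniform bound $\cost{2}{\hat{\type}}{\type_{max}}\geq\cost{2}{t_0}{\type_{max}}$ for $\hat{\type}\geq t_0$, whereas the paper sets $p_1^\ell=\min_{\type}\frac{\cost{2}{\type}{\type_{max}}-\cost{2}{0}{\type_{max}}}{\type}$; your variant avoids relying on the existence and strict positivity of that minimizer, which is exactly the delicacy the paper flags in its own todo note.
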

\begin{proof}
         Recall that by Proposition \ref{PROP:CONST}, if provider $1$ has positive profit in some constant equilibrium then
         \begin{align}
         \cost{1}{0}{\type_{max}} < \cost{2}{0}{\type_{max}}.
         \end{align}
        Let $\bar{\type}$ be a type such that 
        \begin{align}
        &\bar{\type}
        =argmin_{\type} \frac{\cost{2}{\type}{\type_{max}}-\cost{2}{0}{\type_{max}}}{\type}
        \end{align}
        For any $\epsilon>0$ with $\epsilon< \mu(0,\type_{max})  \frac{\cost{2}{\bar{\type}}{\type_{max}}-\cost{2}{0}{\type_{max}})}{\bar{\type}}$,  let $\rho_1= (p_1^f,p_1^\ell)$
        with
        $p_1^f = \cost{2}{0}{\type_{max}}-\epsilon$
        and $p_1^\ell =\frac{\cost{2}{\bar{\type}}{\type_{max}}-\cost{2}{0}{\type_{max}}}{\bar{\type}}$.
Then for any $p_2^f$ provider $2$ obtains no users or there exists a $\type>\hat{\type}$ such that he obtains all users with type $\type>\hat{\type}$ and his profit is given by 
        \begin{align} 
        &\pi_2(\rho_1,\rho_2, \sigma)= \int_{\hat{\type}}^{\type_{max}} f({\type})(p_2^f- \cost{2}{\hat{\type}}{\type_{max}} ) d{\type} \\
        =       &\int_{\hat{\type}}^{\type_{max}} f({\type})(p_1^f+\hat{\type} p_1^\ell - \cost{2}{\hat{\type}}{\type_{max}} ) d{\type} \\
                \leq &\int_{\hat{\type}}^{\type_{max}} f({\type})(p_1^f+\hat{\type} \frac{\cost{2}{\hat{\type}}{\type_{max}}-p_1^f+\epsilon}{\hat{\type}}\\
                &- \cost{2}{\hat{\type}}{\type_{max}} ) d{\type} \\
                =& 0- (F(\type_{max})-F(\hat{\type}))\epsilon
        \end{align}
    
        Therefore, provider $2$ has no individually rational response for which he obtains any users.   
   
        It follows that for any rational response of provider $2$ and any  $\epsilon< \mu(0,\type_{max})  \frac{\cost{2}{\bar{\type}}{\type_{max}}-\cost{2}{0}{\type_{max}}}{\bar{\type}}$ provider $1$'s profit is 
               \begin{align} 
               &\pi_1(\rho_1,\rho_2,\sigma)\\
               =&\int_{0}^{\type_{max}} f({\type})(p_1^f+\type p_1^\ell - \cost{1}{0}{\type_{max}} ) d{\type} \\   
               =& \cost{2}{0}{\type_{max}}- \cost{1}{0}{\type_{max}}
               -\epsilon\\
               & + \mu(0,\type_{max})  \frac{\cost{2}{\bar{\type}}{\type_{max}})-\cost{2}{0}{\type_{max}}}{\bar{\type}}\\
               >&\cost{2}{0}{\hat{\type}}- \cost{1}{0}{\hat{\type}}
               \end{align}
               By Proposition \ref{PROP:CONST}, the profit therefore is greater than the profit in any constant BNE.
        \end{proof}

\subsection{Both Providers are Innovative}
Once one provider starts to employ linear pricing, the other provider might at some point also want to follow. 
Consequently, we now look at the case where both providers are innovative. When both providers employ linear pricing, the first provider loses much of the additional power he had when the other provider stayed conservative. Consequently, there is no general guarantee that he can still improve his profit. But as long as the cost functions are strictly split-convex, he can still guarantee himself that profits do not decrease compared to any constant BNE.   

\begin{theorem}Assume the cost function of provider $2$ is strictly split-convex.   
         Then there exists a strategy $\rho_1$  with $p_1^\ell >0$ such that, for any individually rational response of provider $2$, guarantees provider $1$ greater or equal profit than in any constant price BNE. 
        \todo[inline]{Note: This statement is weaker than saying 'in any equilibrium'. There can still be equilibria where the profit of Prov $1$ decreases!}
\end{theorem}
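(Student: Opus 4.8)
The plan is to let provider~1 commit to the linear schedule whose \emph{average payment} traces a supporting line of provider~2's cost, viewed as a function of the mean type of the interval served. I restrict attention to the natural ordering $\cost{1}{0}{\type_{max}}\le\cost{2}{0}{\type_{max}}$, in which provider~1 is the one who can profit in a constant BNE (by Proposition~\ref{PROP:CONST} the largest such profit is $\cost{2}{0}{\type_{max}}-\cost{1}{0}{\type_{max}}$, the target). Concretely, I would seek $\rho_1=(p_1^f,p_1^\ell)$ with $p_1^\ell>0$ such that the induced average price $p_1^f+p_1^\ell\,\mu(a,b)$ on any interval $[a,b]$ satisfies
\begin{align}
p_1^f+p_1^\ell\,\mu(0,\type_{max}) &= \cost{2}{0}{\type_{max}},\\
p_1^f+p_1^\ell\,\mu(0,\hat\type) &\le \cost{2}{0}{\hat\type}\ \ \text{for all }\hat\type,\\
p_1^f+p_1^\ell\,\mu(\hat\type,\type_{max}) &\le \cost{2}{\hat\type}{\type_{max}}\ \ \text{for all }\hat\type.
\end{align}
In words: provider~1's average price equals provider~2's full-market cost but lies weakly below provider~2's cost on every prefix $[0,\hat\type]$ and every suffix $[\hat\type,\type_{max}]$. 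Since the average of a linear price over an interval equals the price at the mean type, these conditions say exactly that the line $m\mapsto p_1^f+p_1^\ell m$ passes through $(\mu(0,\type_{max}),\cost{2}{0}{\type_{max}})$ and stays weakly below the cost points $(\mu(0,\hat\type),\cost{2}{0}{\hat\type})$ and $(\mu(\hat\type,\type_{max}),\cost{2}{\hat\type}{\type_{max}})$, i.e. it is a supporting line of provider~2's interval-cost curve at the overall mean type.

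Granting such a $\rho_1$, the argument closes as follows. By Corollary~\ref{COR:EQUALSTRAT} and the cutoff structure in the proof of Proposition~\ref{LEM:BESTRESPONSE}, any response $\rho_2\neq\rho_1$ splits the population at a single cutoff, so provider~2 can only attract a prefix $[0,\hat\type]$ or a suffix $[\hat\type,\type_{max}]$. Since provider~2 is now innovative, he can tilt his own line to sit just below provider~1's on the grabbed interval $S$, so the revenue he can extract there approaches $\int_{S}(p_1^f+p_1^\ell \type)f(\type)\,d\type$; but any utility-maximizing split forces his prices strictly below $\rho_1$ on the interior of $S$, making his revenue strictly smaller than this, hence strictly below $(\text{mass of }S)\cdot\cost{2}{}{}$ by the two inequalities above. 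Thus provider~2's profit on any \emph{proper} interval is strictly negative and not individually rational; the only break-even possibility is the whole market (the tangency point), attainable only by exactly matching $\rho_1$, where the deviation-stable tie-breaking of our equilibrium notion hands the market to provider~1 because he strictly gains from an infinitesimal undercut. Hence every individually rational response leaves provider~1 the entire population at profit $p_1^f+p_1^\ell\,\mu(0,\type_{max})-\cost{1}{0}{\type_{max}}=\cost{2}{0}{\type_{max}}-\cost{1}{0}{\type_{max}}$, which equals the largest constant-BNE profit; this yields the weak inequality, and the whole-market tie is precisely why it cannot be strengthened to a strict one.

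The main obstacle is establishing that the supporting line exists with a strictly positive slope. Writing $s_L=\sup_{\hat\type}\frac{\cost{2}{0}{\type_{max}}-\cost{2}{0}{\hat\type}}{\mu(0,\type_{max})-\mu(0,\hat\type)}$ and $s_R=\inf_{\hat\type}\frac{\cost{2}{\hat\type}{\type_{max}}-\cost{2}{0}{\type_{max}}}{\mu(\hat\type,\type_{max})-\mu(0,\type_{max})}$, a valid positive slope exists exactly when $0<s_L\le s_R$, and then any $p_1^\ell\in[\,s_L,s_R\,]$ works with $p_1^f$ fixed by the equality condition. Strict monotonicity of provider~2's cost in the mean type gives $s_R>0$ (this is exactly the point flagged for the one-provider analogue), so the crux is $s_L\le s_R$: that $(\mu(0,\type_{max}),\cost{2}{0}{\type_{max}})$ lies on the lower convex envelope of the interval-cost points. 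I would obtain the matched-cutoff version from strict split-convexity directly — rearranging the split-convexity inequality at cutoff $\hat\type$ gives $\frac{\cost{2}{0}{\type_{max}}-\cost{2}{0}{\hat\type}}{\mu(0,\type_{max})-\mu(0,\hat\type)}\le \frac{\cost{2}{\hat\type}{\type_{max}}-\cost{2}{0}{\type_{max}}}{\mu(\hat\type,\type_{max})-\mu(0,\type_{max})}$, i.e. each prefix-slope is dominated by its complementary suffix-slope. The delicate step is upgrading this matched comparison to the all-pairs statement $s_L\le s_R$; for this I would use that provider~2's interval cost is a (strictly convex) function of the mean type of the served interval, so that all prefix and suffix points lie on one convex curve and the matched inequality propagates to every pair. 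I expect this convexity-in-the-mean step, together with the careful treatment of the whole-market tie, to be where the real work lies.
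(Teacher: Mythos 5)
Your construction is essentially the paper's own proof: the paper also prices along a supporting line of provider $2$'s cost at the full-market mean --- specifically $p_1^\ell = \frac{d}{d\type}\big|_{\type_{max}} \cost{2}{0}{\type}$ with $p_1^f + p_1^\ell\,\mu(0,\type_{max}) = \cost{2}{0}{\type_{max}}$ --- then invokes Proposition~\ref{LEM:BESTRESPONSE} to reduce any deviation by provider $2$ to price-matching, concludes that his profit on any proper prefix or suffix is strictly negative, and hands the whole market to provider $1$ at profit $\cost{2}{0}{\type_{max}} - \cost{1}{0}{\type_{max}}$, the largest constant-BNE profit. The one point of divergence is precisely the step you flag as delicate: rather than seeking an arbitrary slope in $[s_L,s_R]$ (which forces the all-pairs prefix-versus-suffix comparison), the paper fixes the slope to be the tangent at the full-market point and asserts ``from strict convexity'' the single tangent-line inequality $\cost{2}{0}{\hat\type} > \cost{2}{0}{\type_{max}} + p_1^\ell\,(\mu(0,\hat\type)-\mu(0,\type_{max}))$ for each cutoff separately, which sidesteps the cross-cutoff issue. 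Be aware, however, that this tangent-line bound is itself not an immediate consequence of the chord-form definition of split-convexity (which only compares matched prefix/suffix pairs); the paper is implicitly leaning on convexity of the cost as a function of the mean type --- exactly the extra hypothesis you say you would need --- so your proposal mirrors the paper both in its structure and in the one step where the real work (or an unstated assumption) lies. You also share the paper's glossing of the exact-tie response $\rho_2=\rho_1$, under which provider $2$ breaks even on the whole market and the guarantee relies on favorable tie-breaking.
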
                           

\begin{proof}   
        If provider $1$ obtains $0$ profit in all constant BNEs, nothing has to be shown.  
Otherwise, assume provider $1$ plays $\rho_1=(p_1^f,p_1^\ell)$ with $p_1^\ell =         \frac{d}{d\type}|_{\type_{max}} \cost{2}{0}{\type_{max}}$ and $
                        p_1^f+p_1^\ell \mu(0,\type_{max})  = \cost{2}{0}{\type_{max}}$.

By Proposition \ref{LEM:BESTRESPONSE}, we know that for all $\rho_2$ with utility-maximizing $\sigma=[0,\hat{\type}]_{\rightarrow 2}$ it holds that $\pi_2(\rho_1,\rho_2,\sigma)\leq \pi_2(\rho_1,\rho_1,\sigma)$ and from strict convexity it follows 
        \begin{align}
        &\cost{2}{0}{\hat{\type}}\\
         > &\cost{2}{0}{\type_{max}} \\
         &+ \frac{d}{d\type}|_{\type_{max}} \cost{2}{0}{\type_{max}} (\mu(0,\hat{\type})- \mu(0,\type_{max}))\\
         =& p_1^f+p_1^\ell \mu(0,\type_{max})+  p_1^\ell (\mu(0,\hat{\type})- \mu(0,\type_{max}))
        \end{align}
        and therefore   
                \begin{align}
                &\pi_2(\rho_1,\rho_1, \sigma)=  \int_{0}^{\hat{\type}} f({\type})(p_1^f+\type p_1^\ell - \cost{2}{0}{\hat{\type}} ) d{\type}\\
                < & \int_{0}^{\hat{\type}} f({\type})(p_1^f+\type p_1^\ell -p_1^f-p_1^\ell \mu(0,\type_{max})  \\
                &-  p_1^\ell (\mu(0,\hat{\type})- \mu(0,\type_{max})) ) d{\type}\\
                =& \int_{0}^{\hat{\type}} f({\type})(\type p_1^\ell -  p_1^\ell \mu(0,\hat{\type}) ) d{\type}\\
                =&0             
                \end{align}
Similarly for $\sigma = [0,\hat{\type}]_{\rightarrow 1}$. Therefore, any individually rational response of provider $2$ has to guarantee user strategy profile $\sigma=[0,\type_max]_{\rightarrow 1}$, for which provider $1$ has profit $\pi_1(\rho_1, \rho_2,\sigma)= \cost{1}{0}{\type_{max}}-\cost{2}{0}{\type_{max}}$, i.e. by Proposition \ref{PROP:CONST} the highest possible profit for any constant BNE. 
\end{proof}

On the other hand, if both providers are symmetric and their costs are split-convex, moving to linear prices cannot lead to any profit in equilibrium.

\begin{proposition}Assume providers are symmetric, i.e., $\costs{1}{\cdot}=\costs{2}{\cdot}$, and costs are split-convex.
	Then there can be no BNE with strictly positive profit for either provider. 
\end{proposition}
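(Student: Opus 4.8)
The plan is to combine the equilibrium requirement that both providers play identical strategies with the split-convexity of their common cost function, and to pin down profits through an infinitesimal undercutting deviation. By Corollary~\ref{COR:EQUALSTRAT}, in any BNE the two providers use the same price function $\rho_1=\rho_2=(p^f,p^\ell)$, and since enforceable profiles have cutoff form the market splits at some $\hat\type$, with (say) provider~$1$ serving $[0,\hat\type]$ and provider~$2$ serving $[\hat\type,\type_{max}]$. Suppose, for contradiction, that some provider earns strictly positive profit; by the symmetry of the setup I may assume $\pi_1>0$.

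First I would bound the sum of both profits. Because both providers charge the same $\rho$, the total revenue is $p^f+p^\ell\,\mu(0,\type_{max})$ independently of the cutoff, so
\[
\pi_1+\pi_2 = p^f + p^\ell\,\mu(0,\type_{max}) - \bigl(F(\hat\type)\cost{1}{0}{\hat\type} + (1-F(\hat\type))\cost{1}{\hat\type}{\type_{max}}\bigr).
\]
Applying split-convexity to the bracketed term (valid for both providers since $\costs{1}{\cdot}=\costs{2}{\cdot}$) yields the key inequality $\pi_1+\pi_2 \le p^f + p^\ell\,\mu(0,\type_{max}) - \cost{1}{0}{\type_{max}}$.

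Next I would invoke the equilibrium condition on provider~$2$ through the deviation $\hat\rho_2=(p^f-\epsilon,p^\ell)$ with $\epsilon>0$. This price strictly dominates $\rho_1$ at every type, so the unique utility-maximizing profile is $[0,\type_{max}]_{\rightarrow 2}$ and provider~$2$'s deviation profit equals $p^f-\epsilon+p^\ell\,\mu(0,\type_{max})-\cost{1}{0}{\type_{max}}$. Since $(\rho_1,\rho_2,\sigma)$ is a BNE, this cannot exceed $\pi_2$ for any $\epsilon>0$; letting $\epsilon\to0$ gives $p^f+p^\ell\,\mu(0,\type_{max})-\cost{1}{0}{\type_{max}}\le\pi_2$. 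Chaining this with the total-profit bound yields $\pi_1+\pi_2\le\pi_2$, i.e.\ $\pi_1\le0$, contradicting $\pi_1>0$; the symmetric argument excludes $\pi_2>0$.

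The step I expect to require the most care is not a deep obstacle but two bookkeeping points in the undercutting argument: verifying that $\hat\rho_2$ truly captures the whole market, which holds because $p^f-\epsilon+p^\ell\type<p^f+p^\ell\type$ for every $\type$, and handling the limit correctly, since the deviation profit only approaches $p^f+p^\ell\,\mu(0,\type_{max})-\cost{1}{0}{\type_{max}}$ from below and never attains it---so the bound on $\pi_2$ must be obtained as a weak inequality in the limit $\epsilon\to0$, which is precisely what the chaining step needs. Symmetry of the cost functions is what lets me apply the single split-convexity inequality to the total cost and run the undercut for either provider.
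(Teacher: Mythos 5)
Your proposal is correct and follows essentially the same route as the paper's own proof: the key deviation is the $\epsilon$-undercut $(p^f-\epsilon,p^\ell)$ that captures the whole market, and split-convexity is used to bound the whole-market cost by the split costs, which together force $\pi_1+\pi_2\le\pi_2$ (the paper phrases this as the deviation profit being at least $\pi_1+\pi_2-\epsilon$). The only cosmetic difference is that you bound the profit sum first and then chain inequalities, and you fold the ``one provider takes everything'' case into the general argument rather than dispatching it separately as the paper does.
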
                               
\begin{proof}  
If, w.l.o.g., all users prefer provider $1$ and he obtains strictly positive payoff, provider $2$ obtains zero profit but can deviate to $p_2^f=p_1^f-\epsilon$, $p_2^\ell=p_1^\ell$ to obtain strictly positive payoff for $\epsilon>0$ small enough. 
Therefore, in any potential BNE, the market is split between providers or both obtain zero profit.
Assume $(\rho_1,\rho_2,\sigma)$ is a BNE and w.l.o.g. $\sigma =[0,\bar{\type}]_{\rightarrow 1}$ for some $0<\hat{\type}<\type_{max}$. 
By Corollary 1  we can assume $\rho_1=\rho_2$.  
Then, for any $\epsilon>0$ and $\hat{\rho}_2= (p_2^f-\epsilon, p_2^\ell)$ all users prefer provider $2$, resulting in profit 
\begin{align}
&\pi_2(\rho_1,\hat{\rho}_2, [0,\type_{max}]_{\rightarrow 2})\\ 
=&  \int_{0}^{\type_{max}} f({\type})(p_2^f-\epsilon+\type p_2^\ell - \cost{2}{0}{\type_{max}} ) d{\type} \\
=&  \int_{0}^{\type_{max}} f({\type})( p_2^f+\type  p_2^\ell) d{\type} -\epsilon- \cost{2}{0}{\type_{max}}  \\
\geq &  \int_{0}^{\type_{max}} f({\type})( p_2^f+\type  p_2^\ell)  d{\type}
-\epsilon\\
&-(1-F(\hat{\type})) \cost{2}{\hat{\type}}{\type_{max}} + F(\hat{\type})\cost{2}{0}{\hat{\type}}\\
=& \pi_2(\rho_1,\rho_2,\sigma) + \pi_1(\rho_1,\rho_2,\sigma) -\epsilon
\end{align}
If $\pi_2(\rho_1,\rho_2,\sigma)>0$ or $\pi_1(\rho_1,\rho_2,\sigma)>0$, it follows that for $\epsilon$ small enough, $\pi_2(\rho_1,\hat{\rho}_2, \hat{\sigma})>\pi_2(\rho_1,\rho_2, \sigma)$, contradicting our assumption that $(\rho_1,\rho_2,\sigma)$ is a BNE. Thus, it must hold that $\pi_2(\rho_1,\rho_2,\sigma)=0$ and $\pi_1(\rho_1,\rho_2,\sigma)=0$.
	
\end{proof}

When cost functions are not split-convex, symmetric providers still always obtain the same profit in any BNE, even though it can be positive. 
\begin{proposition}Assume providers are symmetric, i.e., $\costs{1}{\cdot}=\costs{2}{\cdot}$.
        Then, in any  BNE it holds $\pi_1(\rho_1,\rho_2,\sigma)=\pi_2(\rho_1,\rho_2,\sigma)$.
\end{proposition}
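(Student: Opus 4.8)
The plan is to invoke Corollary~\ref{COR:EQUALSTRAT} to reduce to the case $\rho_1 = \rho_2 = \rho = (p^f, p^\ell)$, and then exploit the symmetry of the cost functions directly: since $\costs{1}{\cdot} = \costs{2}{\cdot}$, whatever segment of users one provider serves, the other provider can serve the \emph{same} segment at essentially the same cost, and hence earn essentially the same profit. Assuming $\pi_1 \neq \pi_2$, this lets the worse-off provider profitably imitate the better-off one, contradicting the equilibrium property.

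First I would record two structural facts. By Corollary~\ref{COR:EQUALSTRAT}, in any BNE $\rho_1 = \rho_2 = \rho$, so by the single-crossing argument of Proposition~\ref{LEM:BESTRESPONSE} the enforceable $\sigma$ splits the population at one cutoff $\hat\type$: one provider serves $[0,\hat\type]$ and the other serves $[\hat\type,\type_{max}]$ (one piece possibly empty). Second, $\pi_i \geq 0$ in any BNE, since a provider incurring a loss could deviate to a price high enough to serve no user and secure zero profit. Now assume towards a contradiction that $\pi_1 \neq \pi_2$; by the symmetry of the statement in the two providers, assume WLOG $\pi_1 > \pi_2 \geq 0$, so in particular $\pi_1 > 0$ and the segment $S_1$ served by provider~$1$ has positive measure.

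The heart of the argument is the deviation by provider~$2$. Suppose provider~$1$ serves the low segment $S_1 = [0,\hat\type]$ with $\hat\type > 0$, and let provider~$2$ deviate to $\hat\rho_2 = (p^f - \delta,\, p^\ell + \delta/\hat\type)$ for small $\delta > 0$. The two price functions now cross exactly at $\hat\type$, with $\hat\rho_2(\type) < \rho(\type)$ for $\type < \hat\type$ and $\hat\rho_2(\type) > \rho(\type)$ for $\type > \hat\type$, so the unique (up to a null set) utility-maximizing profile $\hat\sigma$ hands $[0,\hat\type]$ to provider~$2$. Using $\costs{2}{\cdot} = \costs{1}{\cdot}$, a direct computation gives
\begin{align}
\pi_2(\hat\rho_2,\rho_1,\hat\sigma)
&= \int_0^{\hat\type} f(\type)\bigl(p^f - \delta + (p^\ell + \delta/\hat\type)\type - \cost{1}{0}{\hat\type}\bigr)\, d\type \nonumber\\
&= \pi_1 - \delta \int_0^{\hat\type} f(\type)\Bigl(1 - \tfrac{\type}{\hat\type}\Bigr)\, d\type. \nonumber
\end{align}
The remaining integral is strictly positive (the integrand is positive on $(0,\hat\type)$ and $F(\hat\type) > 0$ since $\pi_1 > 0$), so $\pi_2(\hat\rho_2,\rho_1,\hat\sigma) \to \pi_1$ as $\delta \to 0$. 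As $\pi_1 > \pi_2$, choosing $\delta$ small enough yields $\pi_2(\hat\rho_2,\rho_1,\hat\sigma) > \pi_2$, contradicting the BNE. The case where provider~$1$ serves the high segment is identical after rotating the other way, $\hat\rho_2 = (p^f + \delta,\, p^\ell - \delta/\hat\type)$, and the degenerate case where provider~$1$ serves the whole market is handled by the uniform undercut $\hat\rho_2 = (p^f - \delta, p^\ell)$, which gives provider~$2$ the entire market at profit $\pi_1 - \delta > \pi_2 = 0$. In every configuration we reach a contradiction, so $\pi_1 = \pi_2$.

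I expect the main obstacle to be bookkeeping rather than conceptual: verifying that each constructed deviation really induces the intended cutoff and the intended utility-maximizing profile (including the boundary/degenerate splits and the tie at $\hat\type$), and confirming that the $O(\delta)$ correction carries the right sign so the deviation profit approaches $\pi_1$ from below rather than overshooting. The conceptual core—that cost symmetry lets either provider replicate the other's segment at vanishing extra cost—is simple; the care lies in treating the empty-segment corners uniformly and in justifying $\pi_i \geq 0$.
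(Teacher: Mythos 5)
Your proof is correct and takes essentially the same approach as the paper, whose entire argument is the one-line observation that the provider with lower profit could ``switch users with the other provider by decreasing prices infinitesimally.'' Your tilted deviation $(p^f-\delta,\,p^\ell+\delta/\hat{\type})$ is exactly the formalization of that infinitesimal switch, worked out with the sign of the $O(\delta)$ correction and the degenerate cases made explicit.
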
                               
\begin{proof}   
    Follows directly by noting that for symmetric providers, whoever has lower profits could switch users with the other provider by  decreasing prices infinitesimally. 
\end{proof}
By definition, whether a tuple $(\rho_1,\rho_2,\sigma)$ is a BNE is decided via a three-dimensional condition space, as the profit has to be better than the profit for any other tuple $(\hat{\rho_1},\hat{\rho_2},\hat{\sigma})$. This makes it very hard to evaluate whether a given tuple is a BNE. The following theorem instead characterizes equilibria by a one-dimensional condition space, greatly reducing the complexity of checking candidate equilibria. 
\begin{theorem}\label{THEO:COND}
        A tuple $(\rho_1,\rho_2,[0,\hat{\type}]_{\rightarrow i})$
     is a BNE if and only if $\rho_1=\rho_2$ and        
        \begin{align} 
        &F(\hat{\type}) \cost{i}{0}{\hat{\type}}-F(a) \cost{i}{0}{a}\\
        \leq & (F(\hat{\type})-F(a)(p_i^f+\mu(a,\type) p_i^\ell) \\
        \leq &(1-F(a))\cost{-i}{a}{\type_{max}}-(1-F(\hat{\type})) \cost{-i}{\hat{\type}}{\type_{max}}
        \end{align}
        for all $0 \leq a \leq \type_{max}$.
        If the providers are not symmetric, it also has to hold for all $0 \leq a \leq \type_{max}$
                                \begin{align} 
         &F(a) (p_i^f+\mu(0,a) p_i^\ell - \cost{-i}{0}{a})\\
         \leq&(1-F(\hat{\type})) (p_i^f+\mu(\hat{\type},\type_{max}) p_i^\ell - \cost{-i}{\hat{\type}}{\type_{max}})
                \end{align}
                        and
                                \begin{align} 
                                                        &       (1-F(a)) (p_i^f+\mu(a,\type_{max}) p_i^\ell - \cost{i}{a}{\type_{max}})\\   
                                                        \leq&F(\hat{\type}) (p_i^f+\mu(0,\hat{\type}) p_i^\ell - \cost{i}{0}{\hat{\type}})
                                                        \end{align}
\end{theorem}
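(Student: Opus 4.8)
The plan is to fix a candidate tuple and reduce the (three-dimensional) BNE condition to a one-dimensional statement about single-cutoff user strategy profiles, which is exactly what the stated families of inequalities encode. First I would invoke Corollary \ref{COR:EQUALSTRAT} to obtain that $\rho_1=\rho_2=:\rho$ is necessary, so it remains to decide, for a fixed $\rho$ and fixed profile $[0,\hat{\type}]_{\rightarrow i}$, whether either provider has a profitable deviation. The key reduction is Proposition \ref{LEM:BESTRESPONSE}: any genuine price deviation $\hat{\rho}_j\neq\rho$ induces a unique cutoff $\hat{\sigma}$ and yields either zero profit or strictly less than $\pi_j(\rho,\rho,\hat{\sigma})$, the profit provider $j$ would obtain by instead \emph{matching} $\rho$ under that same cutoff. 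Conversely, since every single-cutoff profile is enforceable, provider $j$ can, by an infinitesimal deviation, secure a profit arbitrarily close to $\pi_j(\rho,\rho,\hat{\sigma})$ from below. Hence ``$(\rho,\rho,[0,\hat{\type}]_{\rightarrow i})$ is a BNE'' is equivalent to: for each provider, the equilibrium profile is profit-maximal among all single-cutoff profiles that provider can enforce. I would also record that the double condition at $a=0$ and $a=\type_{max}$ forces both equilibrium profits to be nonnegative, which is what rules out the ``zero-profit'' deviations in the sufficiency direction.

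Next I would enumerate the enforceable profiles. With linear prices, a deviating provider can enforce either a bottom interval $[0,a]$ or a top interval $[a,\type_{max}]$, for any $a\in[0,\type_{max}]$, with himself on the chosen side. This gives four families of comparisons: provider $i$ against bottom intervals, provider $-i$ against top intervals, and the two \emph{cross} deviations (provider $i$ against top intervals, provider $-i$ against bottom intervals). Writing each profit as $\pi_j([a,b]_{\rightarrow j})=(F(b)-F(a))(p_j^f+\mu(a,b)p_j^\ell-\cost{j}{a}{b})$ and substituting the appropriate cost term, each comparison rearranges into one of the stated inequalities. Concretely, provider $i$'s bottom comparison gives the left half of the double condition, provider $-i$'s top comparison gives the right half, and the cross deviation of provider $i$ (taking a top interval) gives the second additional condition while that of provider $-i$ (taking a bottom interval) gives the first; this matching is just cost-term bookkeeping, reading the middle term as $(F(\hat{\type})-F(a))(p_i^f+\mu(a,\hat{\type})p_i^\ell)=\int_a^{\hat{\type}} f(\type)(p_i^f+\type p_i^\ell)\,d\type$.

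Finally I would treat the symmetric case $\costs{1}{\cdot}=\costs{2}{\cdot}$. Here the cross-deviation profit of one provider coincides with the own-side profit of the other, so the second additional condition collapses, using the right half of the double condition (which says $\hat{\type}$ maximizes the top-interval profit), to the single inequality $\pi_{-i}([\hat{\type},\type_{max}]_{\rightarrow -i})\le\pi_i([0,\hat{\type}]_{\rightarrow i})$, and the first additional condition collapses to the reverse inequality; together they amount to equality of the two equilibrium profits. I would then argue this equality is automatic: the double condition is tight at $a=\hat{\type}$ (all three bounds vanish there and share a common one-sided slope), which pins the price $p_i^f+\hat{\type}p_i^\ell$ at the cutoff, and combined with the preceding proposition establishing that symmetric providers earn equal profit in any BNE this forces $\pi_i=\pi_{-i}$, so both additional conditions are implied and may be dropped.

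The step I expect to be the main obstacle is precisely this symmetric reduction: showing that, for symmetric providers, the double condition alone already forces the two providers' equilibrium profits to be equal, equivalently that it pins the split of the market-covering price between $p^f$ and $p^\ell$ to the profit-equalizing one. Common local maximization of the bottom- and top-interval profits at $\hat{\type}$ is not by itself enough, so the argument must exploit either the global nature of the double condition or the earlier symmetric equal-profit result; making this airtight, rather than the routine cost-term bookkeeping of the four deviation families, is where the real work lies.
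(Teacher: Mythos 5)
Your proposal follows essentially the same route as the paper: it invokes Corollary \ref{COR:EQUALSTRAT} and Proposition \ref{LEM:BESTRESPONSE} to reduce the BNE test to profit comparisons over enforceable single-cutoff user strategy profiles, enumerates the same four deviation families (own-side and cross for each provider), and rearranges each comparison into the stated inequalities, with the $a=0$ and $a=\type_{max}$ instances supplying the nonnegativity needed to dispose of the zero-profit branch of Proposition \ref{LEM:BESTRESPONSE}. The one step you single out as the real obstacle --- that for symmetric providers the double condition alone implies the two cross-deviation conditions, which as you note amounts to forcing equal equilibrium profits --- is precisely the step the paper dismisses in a single unargued sentence, so your caution there identifies a genuine looseness in the original proof rather than a defect of your approach.
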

\begin{proof}
        A tuple $(\rho_1,\rho_2,[0,\hat{\type}]_{\rightarrow i})$ is a BNE if no provider has an advantageous deviation. From Proposition \ref{LEM:BESTRESPONSE} we know that any deviation with a different price vector is worse than keeping the same price vector and choosing the deviating provider's most-preferred utility-maximizing user strategy profile. 
        Thus, the tuple is a BNE if and only if moving to any different user strategy profile  (weakly) decreases both providers' profits. W.l.o.g. assume $i=2$.
       Then changing to any any $[0,a]_{\rightarrow i}$ with $0\leq a < \hat{\type}$ yields a profit change for Provider $1$ of 
       \begin{align} 
       &       (1-F(\hat{\type})) (\cost{1}{\hat{\type}}{\type_{max}}- \cost{1}{a}{\type_{max}}\\
       &+ \int_{a}^{\hat{\type}} f({\type})(p_1^f+\type p_1^\ell - \cost{1}{a}{\type_{max}} ) d{\type} \\
       =&      (1-F(\hat{\type})) \cost{1}{\hat{\type}}{\type_{max}}-  (1-F(a))\cost{1}{a}{\type_{max}}\\
       &+  (F(\hat{\type})-F(a)) (p_1^f+\mu(a,\type) p_1^\ell) 
       \end{align}
       while with $\hat{\type}< a <\type_{max}$ it yields a change of
       \begin{align} 
       &       (1-F(a))(\cost{1}{\hat{\type}}{\type_{max}}- \cost{1}{a}{\type_{max}}\\
       &- \int_{\hat{\type}}^{a} f({\type})(p_1^f+\type p_1^\ell - \cost{1}{\hat{\type}}{\type_{max}} ) d{\type} \\
       =&      (1-F(\hat{\type})) \cost{1}{\hat{\type}}{\type_{max}}\\
       &-      (1-F(a))\cost{1}{a}{\type_{max}}\\
       &+  (F(\hat{\type})-F(a)) (p_1^f+\mu(a,\type) p_1^\ell) 
       \end{align}      
       Similarly, the profit change for provider $2$ is given by
       \begin{align} 
       &       F(a) \cost{2}{0}{\hat{\type}}\\
       &- \int_{a}^{\hat{\type}} f({\type})(p_1^f+\type p_1^\ell- \cost{2}{0}{\hat{\type}} ) d{\type} \\
       =& F(\hat{\type}) \cost{2}{0}{\hat{\type}}-F(a) \cost{2}{0}{a}\\
       &- (F(\hat{\type})-F(a)) (p_1^f+\mu(a,\type) p_1^\ell) 
       \end{align}
       for $a < \hat{\type}$
       and 
       \begin{align} 
       &       F(\hat{\type}) \cost{2}{0}{\hat{\type}}\\
       &+ \int_{\hat{\type}}^{a} f({\type})(p_1^f+\type p_1^\ell \- \cost{2}{0}{a} ) d{\type} \\
       =& F(\hat{\type}) \cost{2}{0}{\hat{\type}}-F(a) \cost{2}{0}{a}\\
       &-( F(\hat{\type})-F(a)) (p_1^f+\mu(a,\type) p_1^\ell) 
       \end{align}
      for $a > \hat{\type}$. Bounding all of these expressions above by zero yields the first half of the theorem.
       
               Equivalently, it has to hold for user strategy profiles that switch which provider obtains the low variance users (i.e. $[0,a]_{\rightarrow -i}$):
               \begin{align} 
               &  \int_{0}^{a} f({\type})(p_1^f+\type p_1^\ell- \cost{1}{0}{a} ) d{\type}\\
               &- \int_{\hat{\type}}^{\type_{max}} f({\type})(p_1^f+\type p_1^\ell - \cost{1}{\hat{\type}}{\type_{max}} ) d{\type} \\
               =&  F(a) (p_1^f+\mu(0,a) p_1^\ell - \cost{1}{0}{a})\\
               &-      (1-F(\hat{\type})) (p_1^f+\mu(\hat{\type},\type_{max}) p_1^\ell - \cost{1}{\hat{\type}}{\type_{max}})\\
               \leq&0
               \end{align}
  and 
                       \begin{align} 
                       & \int_{a}^{\type_{max}} f({\type})(p_1^f+\type p_1^\ell - \cost{2}{a}{\type_{max}} ) d{\type} \\
                       &- \int_{0}^{\hat{\type}} f({\type})(p_1^f+\type p_1^\ell - \cost{2}{0}{\hat{\type}} ) d{\type}\\
                       =&      (1-F(a)) (p_1^f+\mu(a,\type_{max}) p_1^\ell - \cost{2}{a}{\type_{max}})\\
                       &-F(\hat{\type}) (p_1^f+\mu(0,\hat{\type}) p_1^\ell - \cost{2}{0}{\hat{\type}})\\
                       \leq&0
                       \end{align}
                       
               If the providers are symmetric, this is equivalent to the conditions with automatically satisfied if the condition without switching is satisfied.                
                               
\end{proof}

While this fully characterizes all BNEs, it is a very technical characterization. While it can be used to \emph{check} whether a given tuple $(\rho_1,\rho_2,\sigma)$ \emph{is} a BNE, it does not enable an easy \emph{search procedure} for finding candidate BNEs. The following corollary helps with that, identifying a small subset of user strategy profiles that can be part of a BNE and reducing the search for BNEs to a one-dimensional search.
\begin{corollary}\label{COR:BNE}
If a tuple $(\rho_1,\rho_2,\sigma)$ with $0<\hat{\type}<\type_{max}$ is a BNE and the cost functions are differentiable, then it holds
        
        \begin{align}
                &\frac{d}{d \type}|_{\hat{\type}} F(\type) \cost{2}{0}{\type}\\
                                = & f(\hat{\type})(p^f+\hat{\type}p^\ell) \label{EQ:BNEPRICE}.\\
                = &\frac{d}{d \type}|_{\hat{\type}}-(1-F(\type))\cost{1}{\type}{\type_{max}}
        \end{align}
        
%
%
\end{corollary}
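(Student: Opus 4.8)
The plan is to reinterpret the equilibrium inequalities of Theorem~\ref{THEO:COND} as the statement that the equilibrium cutoff $\hat\type$ simultaneously maximizes each provider's profit when the prices are frozen at their equilibrium values $\rho_1=\rho_2=(p^f,p^\ell)$ and only the cutoff is varied, and then to read off the two claimed equalities as the first-order conditions of these two maximization problems. Taking $i=2$ without loss of generality (so that provider $2$ serves $[0,\hat\type]$), I would introduce the profit of each provider as a function of a hypothetical cutoff $x$, namely $\pi_2(x)=\int_0^x f(\type)(p^f+\type p^\ell)\,d\type-F(x)\cost{2}{0}{x}$ and $\pi_1(x)=\int_x^{\type_{max}} f(\type)(p^f+\type p^\ell)\,d\type-(1-F(x))\cost{1}{x}{\type_{max}}$, so that $\pi_j(\hat\type)$ is the equilibrium profit of provider $j$.

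The first step is to show that the two-sided inequality in the first half of Theorem~\ref{THEO:COND} is exactly the pair of statements $\pi_2(a)\le\pi_2(\hat\type)$ and $\pi_1(a)\le\pi_1(\hat\type)$ for all $a\in[0,\type_{max}]$. This follows by rewriting the middle term using the definition of $\mu$: since $(F(\hat\type)-F(a))\,\mu(a,\hat\type)=\int_a^{\hat\type} f(\type)\type\,d\type$, the middle term equals $\int_a^{\hat\type} f(\type)(p^f+\type p^\ell)\,d\type$. Substituting this identity, the left inequality rearranges precisely to $\pi_2(\hat\type)-\pi_2(a)\ge 0$ and the right inequality to $\pi_1(\hat\type)-\pi_1(a)\ge 0$. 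Hence $\hat\type$ is a global maximizer of both $\pi_1$ and $\pi_2$ over $[0,\type_{max}]$.

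The second step is the first-order argument. Since $0<\hat\type<\type_{max}$, the maximum is attained at an interior point, and because the cost functions are assumed differentiable, both $\pi_1$ and $\pi_2$ are differentiable. Therefore $\pi_2'(\hat\type)=0$ and $\pi_1'(\hat\type)=0$. Differentiating with the fundamental theorem of calculus gives $\pi_2'(x)=f(x)(p^f+x p^\ell)-\frac{d}{dx}\,F(x)\cost{2}{0}{x}$ and $\pi_1'(x)=-f(x)(p^f+x p^\ell)-\frac{d}{dx}\,(1-F(x))\cost{1}{x}{\type_{max}}$; evaluating each at $\hat\type$ and setting it to zero yields the two displayed equalities, with $f(\hat\type)(p^f+\hat\type p^\ell)$ as the common middle value.

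The step I expect to be the only real subtlety is establishing that the derivative vanishes rather than merely satisfying a one-sided inequality. This is exactly where interiority is used: Theorem~\ref{THEO:COND} constrains deviations to cutoffs $a$ on both sides of $\hat\type$, so $\hat\type$ is an interior maximizer and the first-order condition holds with equality; at a boundary cutoff one would only obtain an inequality, which is why the hypothesis $0<\hat\type<\type_{max}$ cannot be dropped. I would also note that the additional ``switching'' inequalities appearing in the non-symmetric case of Theorem~\ref{THEO:COND} describe a global rearrangement (swapping which provider serves the low- versus high-variance segment) rather than an infinitesimal shift of the cutoff, so they do not enter the first-order condition and can be ignored for this corollary.
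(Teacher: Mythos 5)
Your proof is correct and rests on the same facts as the paper's: the three expressions in the first chain of inequalities of Theorem~\ref{THEO:COND} all vanish at $a=\hat{\type}$, are ordered for every $a$, and must therefore have equal derivatives at the interior point $\hat{\type}$. The difference lies in how that last step is justified. The paper argues directly that three ordered differentiable curves which touch at a point must be mutually tangent there and hence share a derivative --- a step it states without a formal supporting result. You instead observe that the two successive differences of those expressions are exactly $\pi_2(\hat{\type})-\pi_2(a)$ and $\pi_1(\hat{\type})-\pi_1(a)$ for the fixed-price profit functions, so that Theorem~\ref{THEO:COND} says $\hat{\type}$ is an interior global maximizer of each, and the two displayed equalities are precisely the first-order conditions $\pi_1'(\hat{\type})=\pi_2'(\hat{\type})=0$. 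This is the same computation, but packaging it as the interior-extremum (Fermat) condition supplies exactly the analysis result the paper's argument leaves implicit, and it makes transparent why the hypothesis $0<\hat{\type}<\type_{max}$ cannot be dropped (at a boundary cutoff one would only get a one-sided inequality). Your closing remark that the additional ``switching'' inequalities of Theorem~\ref{THEO:COND} describe a global rearrangement rather than an infinitesimal perturbation of the cutoff, and hence do not contribute to the first-order condition, is also correct and matches the role they play in the paper.
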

\begin{proof}
Note that for  $ \type = \hat{\type}$ it trivially holds        
        \begin{align} 
        &F(\hat{\type}) \cost{2}{0}{\hat{\type}}-F(\type) \cost{2}{0}{\type}\\
        =& (F(\hat{\type}))-F(\type)(p_1^f+\mu(\type,\type) p_1^\ell) \\
        = &(1-F(\type))\cost{1}{\type}{\type_{max}}-(1-F(\hat{\type})) \cost{1}{\hat{\type}}{\type_{max}}.
        \end{align}
Theorem \ref{THEO:COND} further gives us that the three expressions have the same ordering for all $\type$, therefore especially for all $\type$ in any small neighborhood around $\hat{\type}$. 
Therefore, all three expressions need to have the same derivative in $\type$ at point $\type= \hat{\type}$.  \todo[inline]{Find analysis result that actually states this. It's somewhat trivial (the lines have to be tangents to each other in the point and therefore have the same derivative) but tricky to formalize...might be possible to use Darboux's theorem}
\end{proof}
Given a cutoff point $\hat{\type}$, all potential equilibrium price functions lie on a line defined by Equation \eqref{EQ:BNEPRICE}. To find a BNE, all that remains to be done is to check whether $\hat{\type}$ with any of the  $(p^f,p^\ell)$ on that line satisfy the condition of Theorem \ref{THEO:COND}. 

\section{Welfare Analysis}

The social welfare of the market is given as the negative sum of the expected costs of both providers, i.e., $w(\rho_1,\rho_2,[0,\type]_{\rightarrow i})=- F(\hat{\type})\cost{i}{0}{\hat{\type}} - (1-F(\hat{\type})) \cost{-i}{\hat{\type}}{\type_{max}}$.
The social welfare in any constant BNE then follows directly from Proposition \ref{PROP:CONST}.
\begin{corollary}\label{LEM:CONSTWEL} Let both providers be conservative and w.l.o.g. assume $\cost{1}{0}{\type_{max}} \leq \cost{2}{0}{\type_{max}} $.
        Then the social welfare in any BNE is 
        $w(\rho_1,\rho_2,\sigma)=-\cost{1}{0}{\type_{max}}$.
\end{corollary}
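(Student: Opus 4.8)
The plan is to reduce the statement entirely to the profit characterization already established in Proposition~\ref{PROP:CONST}. That proposition pins down, in every constant BNE, the common fixed price $p := p_1^f = p_2^f$ together with the profits $\pi_1 = p - \cost{1}{0}{\type_{max}}$ and $\pi_2 = 0$. Since welfare is by definition the negative of total cost, the only remaining work is to translate these known profits into a statement about total cost, which I would do via a revenue-accounting identity.

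For the main route I would observe that, because both providers are conservative and charge the \emph{same} per-unit price $p$, every user pays exactly $p$, so the total revenue collected from the population is $\int_0^{\type_{max}} p\, f(\type)\, d\type = p$, independently of how $\sigma$ splits the market (here I use $\int_0^{\type_{max}} f(\type)\, d\type = 1$). Summing the defining identity $\pi_j = (\text{revenue}_j) - (\text{cost}_j)$ over $j \in \{1,2\}$ then gives total cost $= p - (\pi_1 + \pi_2) = p - (p - \cost{1}{0}{\type_{max}}) = \cost{1}{0}{\type_{max}}$, and negating yields $w = -\cost{1}{0}{\type_{max}}$. The appeal of this route is that it never invokes the exact form of $\sigma$: the revenue identity holds whether the market is split or not, so no case distinction on the user strategy profile is needed.

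As a transparent cross-check I would also carry out the direct substitution. Proposition~\ref{PROP:CONST} tells us that in the generic case all users join provider~$1$, i.e. $\sigma = [0,\type_{max}]_{\rightarrow 1}$ with $\hat{\type} = \type_{max}$; substituting into $w = -F(\hat{\type})\cost{1}{0}{\hat{\type}} - (1-F(\hat{\type}))\cost{2}{\hat{\type}}{\type_{max}}$ and using $F(\type_{max}) = 1$ collapses the second term and gives $w = -\cost{1}{0}{\type_{max}}$ at once.

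The only genuine subtlety---and the one spot where a careless substitution argument would slip---is the degenerate case $\cost{1}{0}{\type_{max}} = \cost{2}{0}{\type_{max}}$, in which Proposition~\ref{PROP:CONST} also admits the equilibrium $\sigma = [0,\type_{max}]_{\rightarrow 2}$ with all mass on provider~$2$. Under direct substitution this produces $w = -\cost{2}{0}{\type_{max}}$, which must then be identified with $-\cost{1}{0}{\type_{max}}$ using the equal-cost hypothesis. The revenue route handles this automatically, since total revenue is $p$ and total profit is $p - \cost{1}{0}{\type_{max}}$ no matter which provider the users land on; this is precisely why I would lead with the revenue argument and relegate the substitution to a sanity check.
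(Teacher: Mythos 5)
Your proof is correct and, at its core, takes the same route as the paper: the paper's own proof is simply ``follows directly from Proposition~\ref{PROP:CONST}'', which is exactly your direct-substitution cross-check (all users join the cheaper provider, so $w=-\cost{1}{0}{\type_{max}}$, with the degenerate equal-cost case handled by the hypothesis). Your revenue-accounting identity is a nice additional way to package the same reduction, but it is not a genuinely different argument.
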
 
\begin{proof}   
        Follows directly from Proposition \ref{PROP:CONST}.  
\end{proof}
If only one provider is innovative, the social welfare often goes down, as the innovative provider can employ his market power to force the conservative provider to give up part of the market, even if it increases overall costs. But if both are innovative, he loses this power and in BNE, the social welfare cannot decrease compared to any constant BNE.  

\begin{proposition}W.l.o.g. assume $\cost{1}{0}{\type_{max}} \leq \cost{2}{0}{\type_{max}} $.
	The social welfare in BNE with both providers being innovative is higher than the social welfare in any constant BNE, i.e.,  
	$w(\rho_1,\rho_2,\sigma)\geq -\cost{1}{0}{\type_{max}}$.
\end{proposition}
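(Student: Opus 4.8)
The plan is to read the desired welfare bound directly off the one-dimensional equilibrium characterization of Theorem~\ref{THEO:COND}, by evaluating its conditions at the two extreme cutoffs $a=0$ and $a=\type_{max}$. First I would invoke Corollary~\ref{COR:EQUALSTRAT} to record that any BNE has $\rho_1=\rho_2$ and a user strategy profile of the form $[0,\hat{\type}]_{\rightarrow i}$ for some cutoff $\hat{\type}$ and some provider $i\in\{1,2\}$. By the definition of welfare this means $w(\rho_1,\rho_2,\sigma)=-\bigl(F(\hat{\type})\cost{i}{0}{\hat{\type}}+(1-F(\hat{\type}))\cost{-i}{\hat{\type}}{\type_{max}}\bigr)$, i.e.\ the negative of the total cost incurred at cutoff $\hat{\type}$. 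It therefore suffices to upper-bound this total cost by $\cost{1}{0}{\type_{max}}$.

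The key observation is that the outermost inequality of Theorem~\ref{THEO:COND} (chaining the left bound past the middle term to the right bound, so that the $p_i^f,p_i^\ell$ terms cancel) rearranges into a clean cost-minimality statement: for every $a\in[0,\type_{max}]$,
\begin{align}
&F(\hat{\type})\cost{i}{0}{\hat{\type}}+(1-F(\hat{\type}))\cost{-i}{\hat{\type}}{\type_{max}}\nonumber\\
\leq{}&F(a)\cost{i}{0}{a}+(1-F(a))\cost{-i}{a}{\type_{max}}.\nonumber
\end{align}
In words: among all allocations that split the market at some cutoff $a$ while keeping provider $i$ on the low-variance side, the equilibrium cutoff $\hat{\type}$ minimizes total cost. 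This is exactly the leverage we need, and it is the step I expect to carry the real content of the proof.

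Once that reinterpretation is in place, I would evaluate the right-hand side at the two endpoints. At $a=\type_{max}$, using $F(\type_{max})=1$, the bound collapses to total cost $\leq\cost{i}{0}{\type_{max}}$, the cost of provider $i$ serving everyone; at $a=0$, using $F(0)=0$, it collapses to total cost $\leq\cost{-i}{0}{\type_{max}}$, the cost of provider $-i$ serving everyone. Taking the stronger of the two gives total cost $\leq\min\{\cost{1}{0}{\type_{max}},\cost{2}{0}{\type_{max}}\}=\cost{1}{0}{\type_{max}}$ by the w.l.o.g.\ assumption, and hence $w(\rho_1,\rho_2,\sigma)\geq-\cost{1}{0}{\type_{max}}$, which by Corollary~\ref{LEM:CONSTWEL} is precisely the welfare of any constant BNE.

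The only subtlety I would flag is a secondary bookkeeping point rather than a genuine obstacle: one must confirm that the ``for all $a$'' range in Theorem~\ref{THEO:COND} genuinely includes the endpoints $a\in\{0,\type_{max}\}$, and that degenerate equilibria with $\hat{\type}\in\{0,\type_{max}\}$ need no separate treatment. Indeed, these are handled for free, because the $a=0$ condition would already be violated by any putative BNE in which the high-cost provider serves the entire market (unless the two total costs coincide), so Theorem~\ref{THEO:COND} automatically rules such profiles out; the uniform argument above then covers every remaining case without any boundary analysis.
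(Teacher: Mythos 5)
Your proof is correct, but it reaches the result by a different route than the paper. The paper argues directly by contradiction from Corollary~\ref{COR:EQUALSTRAT}: since $\rho_1=\rho_2$ in any BNE, each user's payment is allocation-independent, so total revenue is the same under the split $[0,\hat{\type}]_{\rightarrow i}$ and under $[0,\type_{max}]_{\rightarrow 1}$; if the split had strictly higher total cost, then $\pi_1(\rho_1,\rho_2,[0,\type_{max}]_{\rightarrow 1}) > \pi_1 + \pi_2$ at the split, giving provider $1$ a profitable (enforceable) deviation and contradicting the BNE property. You instead route everything through Theorem~\ref{THEO:COND}: chaining its outer inequalities so the price terms cancel yields that the equilibrium cutoff minimizes total cost $F(a)\cost{i}{0}{a}+(1-F(a))\cost{-i}{a}{\type_{max}}$ over all same-orientation cutoffs $a$, and evaluating at $a=0$ and $a=\type_{max}$ gives total cost at most $\min\{\cost{1}{0}{\type_{max}},\cost{2}{0}{\type_{max}}\}=\cost{1}{0}{\type_{max}}$. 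Both arguments rest on the same economic fact (revenue invariance under $\rho_1=\rho_2$ makes profitable deviations track cost reductions), but yours buys a strictly stronger intermediate statement --- cost-minimality of the equilibrium cutoff among \emph{all} splits with provider $i$ on the low-type side, not just versus the two single-provider allocations --- at the price of inheriting any gaps in the heavier Theorem~\ref{THEO:COND}, whereas the paper's version is self-contained given only Corollary~\ref{COR:EQUALSTRAT}. Two minor bookkeeping notes: the form $\sigma=[0,\hat{\type}]_{\rightarrow i}$ comes from the enforceability restriction on user strategy profiles, not from Corollary~\ref{COR:EQUALSTRAT}; and the paper's deviation step implicitly uses $\pi_2\geq 0$ in any BNE (provider $2$ can always price himself out of the market), a point your version sidesteps entirely.
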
                               

\begin{proof}   
  Recall that the social welfare is given by $(-1)$ times the sum of the expected costs, i.e., $w(\rho_1, \rho_2. [0,t]_{\rightarrow i}) = (-1) \left(F(\hat{\type})\cost{i}{0}{\hat{\type}} + (1-F(\hat{\type})) \cost{-i}{\hat{\type}}{\type_{max}} \right)$.     
  Consequently, whenever all users choose provider $1$, the social welfare is the same as in any constant BNE, i.e., $w(\rho_1,\rho_2,[0,\type_{max}]_{\rightarrow 1})= -\cost{1}{0}{\type_{max}}$.
  
  We now show the claim of the proposition by contradiction.
  Assume $(\rho_1,\rho_2,[0,\hat{\type}]_{\rightarrow i})$ for $i\in {1,2}$ is an innovative BNE where the social welfare is strictly lower than the social welfare in any constant BNE. Then consequently, the social welfare under $(\rho_1,\rho_2,[0,\hat{\type}]_{\rightarrow i})$ is also strictly lower than under $(\rho_1,\rho_2,[0,\type_{max}]_{\rightarrow 1})$, i.e., the sum of the expected costs is strictly higher; formally:  
  $F(\hat{\type})\cost{i}{0}{\hat{\type}} + (1-F(\hat{\type})) \cost{-i}{\hat{\type}}{\type_{max}} > \cost{1}{0}{\type_{max}}.
  $
  
  Additionally, by Corollary 1, we can assume that $\rho_1=\rho_2$. This means that the payment of any user is independent of which provider he chooses, and therefore the sum of the revenues of both providers does not change between  $(\rho_1,\rho_2,[0,\hat{\type}]_{\rightarrow i})$ and $(\rho_1,\rho_2,[0,\hat{\type}]_{\rightarrow 1})$. Taken together, when going from the first profile to the second profile, the sum of the revenues stay the same but the costs strictly decrease, which implies that $\pi_1(\rho_1,\rho_2,[0,\type_{max}]_{\rightarrow 1}) > \pi_1(\rho_1,\rho_2,[0,\hat{\type}]_{\rightarrow i})+\pi_2(\rho_1,\rho_2,[0,\hat{\type}]_{\rightarrow i})$. Therefore,  $(\rho_1,\rho_2,[0,\hat{\type}]_{\rightarrow i})$ can not be a BNE, a contradiction.  
\end{proof}

\section{Numerical Example}
In this section, we illustrate our results via a simple numerical example. 
We assume that user types are uniformly distributed on $[0,1]$. 
Further, we assume  provider $1$ has cost function $\cost{1}{a}{b} = 0.0125+ \mu(a,b)^2$ and provider $2$ has cost function  $\cost{2}{a}{b} = 0.2+ \frac{\mu(a,b)^2}{4}$. Thus, provider $1$ has a lower cost for low types but a higher cost for high types than provider $2,$ and both providers have the same cost for the whole user population. From Proposition~\ref{PROP:CONST}, we know that when both providers are conservative, there are only zero-profit BNEs. They occur at $p_1^f= p_2^f =0.2625$ with a welfare of $-0.2625$. 
        
        \begin{figure}[t]
                \centering%
                \includegraphics[width=0.45 \textwidth]{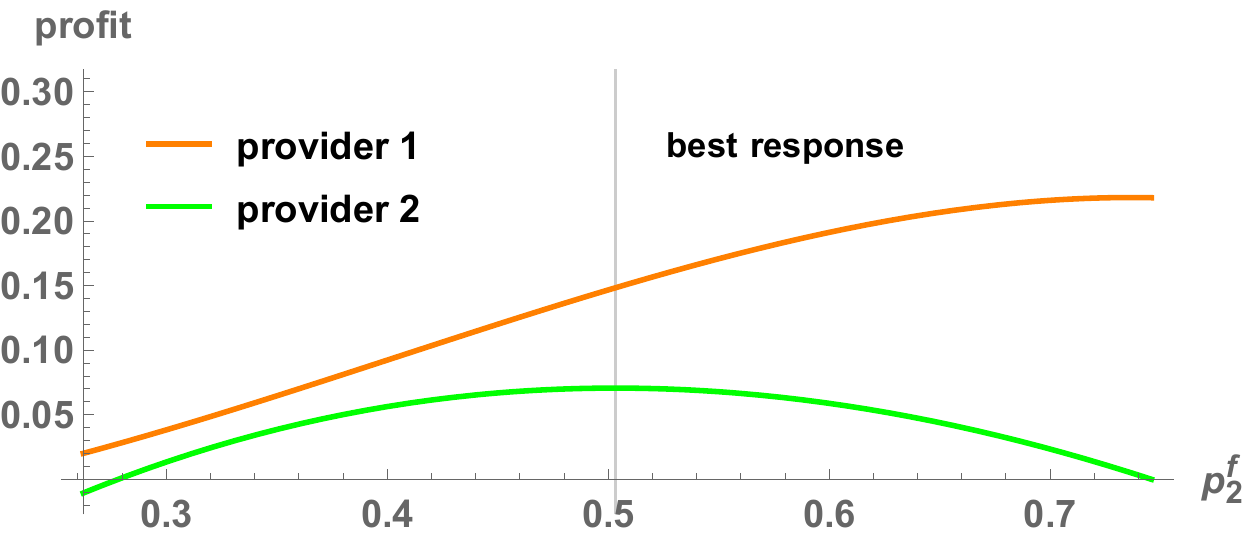}
                \caption{Profit of  both providers for all  conservative responses of provider $2$ to provider 1 playing ($p_1^f=0.215$, $p_1^\ell=0.5309$). }
                \label{FIG:1}
        \end{figure}%
        
                \begin{figure}[t]
                        \centering%
                        \includegraphics[width=0.45 \textwidth]{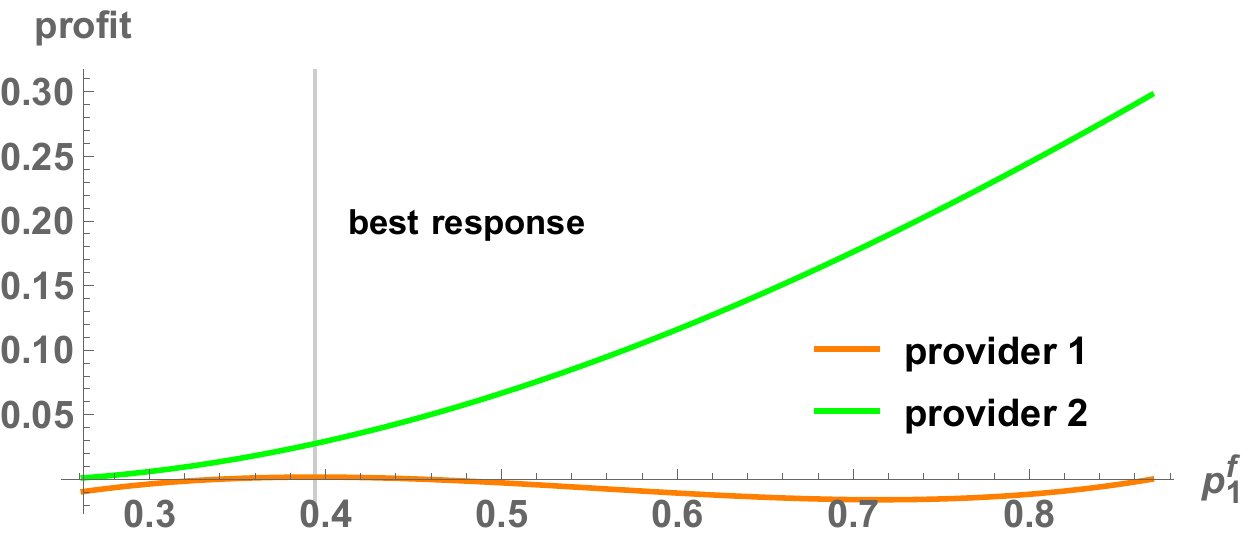}
                        \caption{Profit of both providers for all conservative responses of provider $1$ to provider 2 playing $(p_2^f=0.2506$, $p_2^\ell=0.6188)$.}
                        \label{FIG:2}
                \end{figure}%

We now consider each provider unilaterally switching to linear prices (as described in Theorem \ref{PROP:POSPROF}). Figure~\ref{FIG:1} shows the profit increase that provider $1$ could obtain by unilaterally innovating while provider $2$ remains conservative; Figure~\ref{FIG:2} shows the analogous result for when provider $2$ becomes innovative. We see that provider $1$ innovating leads to the overall better result for both providers. This is not surprising, considering that the innovative provider obtains the lower type portion of the market in which provider $1$ has lower costs. At the best response of provider $2$, the social welfare therefore increases to $-0.2077$. Nonetheless, if provider $2$ innovates instead, he can still obtain a profit of $0.0442$ at provider $1$'s best constant response of $p_1^f = 0.3941$. Since here provider $2$ uses the power of his larger strategy space as an innovative provider to force provider $1$ to obtain a high type population interval (for which provider $1$ has higher costs) social welfare unsurprisingly decreases to $-0.3846$. 

For the case where both providers are willing to employ linear pricing, Corollary \ref{COR:BNE} provides us with conditions on candidate equilibrium user strategy profiles. For our example, we can use those conditions to  find four cutoff points:  $\sigma = [0,0.595]_{\rightarrow 1}$, $\sigma= [0.5431,1]_{\rightarrow 1}$, $\sigma= [0,1]_{\rightarrow 1}$ and $\sigma = [0,0]_{\rightarrow 1}$. All of these except for  $\sigma = [0,0.595]_{\rightarrow 1}$ do not satisfy Theorem \ref{THEO:COND} and are  eliminated. For $\sigma = [0,0.595]_{\rightarrow 1}$ any $p_1^f=p_2^f \in [0, 0.0409]$, $p_1^\ell=p_2^\ell = \frac{0.2784 - p_1^f}{0.595}$ satisfy Theorem \ref{THEO:COND} and form equilibrium pricing strategies.
                \begin{figure}[t]
                        \centering%
                        \includegraphics[width=0.45 \textwidth]{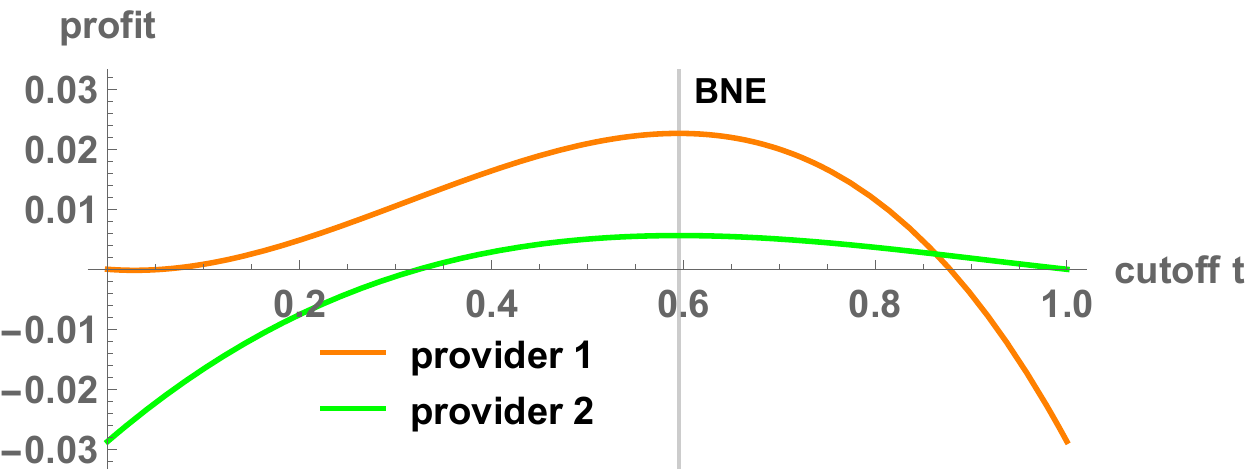}
                        \caption{Profit of both providers for different  $\sigma=[0,\type]_{\rightarrow 1}$ }
                        \label{FIG:3}
                \end{figure}%
To visualize an example BNE, Figure \ref{FIG:3} shows the profit of both providers for $p_1^f=p_2^f = 0$ and $p_1^\ell=p_2^\ell=0.4676$ given any utility maximizing $\sigma=[0,\type]_{\rightarrow 1}$. We see that neither provider wants to deviate to enforce a different user strategy profile $\sigma=[0,\type]_{\rightarrow 1}$ than  $\sigma= [0,0.595]_{\rightarrow 1}$. Social welfare at this BNE is $-0.2055$, which is even slightly better than when only provider $1$ was innovative. However, the increased competition leads to a markedly lower profit for both providers than when only provider $1$ was innovative, suggesting that it can be in a conservative provider's interest not to become innovative if the other provider is already innovative.

\section{Conclusion}
In this paper, we have studied the competitive effects of providers utilizing linear pricing rules in settings where a provider's costs depend on the average type of all his users. We have shown that, while a single provider innovating often leads to non-existence of BNEs, the innovating provider can exert the additional market power of his larger strategy space to unilaterally set prices that increase his profit for all individually rational responses of a conservative provider. We have further characterized all equilibria where both providers employ linear pricing.  We have shown that while much of this additional market power is lost once the other provider also adopts linear pricing, the increased strategy space allows providers to split the market more closely along differences in their cost functions, often increasing both providers' profits and social welfare. In conclusion, linear variance-based or type-based prices often seem superior to constant prices even under competition. Future work should study secondary effects like incentivizing users to actively lower their variance. 
\todo[inline]{Not sure if we have the space to talk much about extensions...}
 

\bibliographystyle{named}
\bibliography{ijcai20}

\end{document}